\journal{Theoretical Computer Science}
\newtheorem{definition}{Definition}
\newtheorem{proposition}[definition]{Proposition}
\newtheorem{theorem}[definition]{Theorem}
\newtheorem{remark}[definition]{Remark}
\newtheorem{corollary}[definition]{Corollary}
\newtheorem{lemma}[definition]{Lemma}
\newcommand\R{\mathbb{R}}
\newcommand{\Q}{\mathbb{Q}}
\newcommand\RP{\mathbb{R}}
\newcommand\N{\mathbb{N}}
\newcommand\Z{\mathbb{Z}}
\newcommand\mC{\mathcal{C}}
\newcommand\mD{\mathcal{D}}
\newcommand{\len}[1]{\bar{\bar{#1}}}
\newcommand{\GPAC}{\sc gpac}
\newcommand{\norm}[1]{\lVert#1\rVert}
\newenvironment{myequation}{\begin{equation}\begin{aligned}}{\end{aligned}\end{equation}}
\newenvironment{proof}{{\bf Proof}:}{\hfill $\Box$ }
\begin{document}

\begin{frontmatter}



\title{A Framework for Algebraic Characterizations in Recursive Analysis}


\author[LIX]{Olivier Bournez}
\address[LIX]{{\sc LIX, Ecole Polytechnique,} 91128 Palaiseau Cedex, France}
\ead{bournez@lix.polytechnique.fr}

\author[EJUST]{Walid Gomaa \fnref{another_affiliation}}
\address[EJUST]{{\sc Egypt-Japan University of Science and Technology}, Alexandria, Egypt}
\fntext[another_affiliation]{Currently on leave from Alexandria University, Egypt}
\ead{walid.gomaa@ejust.edu.eg}

\author[LORIA,UHP]{Emmanuel Hainry}
\address[LORIA]{{\sc Loria,} BP 239 - 54506 Vand{\oe}uvre-l{\`e}s-Nancy Cedex, France}
\address[UHP]{{\sc Nancy Universit{\'e}, Universit{\'e} Henri Poincar{\'e},} Nancy, France}
\ead{Emmanuel.Hainry@loria.fr}

\begin{abstract}

Algebraic characterizations of the computational aspects
of functions defined over the real numbers provide very effective tool
to understand what computability and complexity over the reals, and generally over
continuous spaces, mean. This is relevant for both communities of computer scientists
and mathematical analysts, particularly the latter who do not understand (and/or like) the language
of machines and string encodings.
Recursive analysis can be considered the most standard framework of computation over continuous
spaces; it is however defined in a very machine specific way which does not leave much to intuitiveness.
Recently several characterizations, in the form of function algebras, of recursively computable functions and some sub-recursive classes were introduced.
These characterizations shed light on the hidden behavior of recursive analysis as they convert complex
computational operations on sequences of real objects to ``simple'' intuitive mathematical operations such as integration or taking limits.
The authors previously presented a framework for obtaining algebraic characterizations at the complexity level over compact domains. The current paper presents a comprehensive extension to that framework. Though
we focus our attention in this paper on functions defined over the whole real line,
the framework, and accordingly the obtained results, can be easily extended
to functions defined over arbitrary domains.

\end{abstract}

\begin{keyword}

recursive analysis \sep polynomial time \sep algebraic characterization
\sep real computation \sep oracle Turing machines \sep complexity class



\end{keyword}

\end{frontmatter}



\section{Introduction}

Many simulations of physical, biological, and/or mathematical phenomena rely on computations over the real numbers.
Hence, it is important to precisely characterize the computability and computational complexity of real processes and operators. The discrete case has been well studied and is now well understood: the Church-Turing thesis
asserts the equivalence of all reasonable models of computations over discrete objects. However, there exist several models of computation over the real numbers which define various notions of computability.
The BSS-model \cite{BluCucShu98,BluShuSma89} named after its discoverers, Blum, Shub and Smale, considers processes of computation using reals in the same way classical models use digits.
While there is no hope to physically realize a BSS-machine, there exist physical machines that compute over the reals.
For example, the Differential Analyzer \cite{Bus31} which was modeled by Shannon as the General Purpose Analog Computer ({\GPAC})\cite{Sha41}. We can also cite algebraically defined classes of computable functions \cite{Moo96}, and recursive analysis. This paper explores the question of complexity in the latter framework: Recursive Analysis.

Recursive analysis is the most accepted model for continuous computation as it does not suffer from physical hopelessness and is as old as the Turing machine. It was indeed introduced by A. Turing \cite{Tur36}. It was also presented by some of the pioneers of computability theory as a natural way of computing over the reals using classical machines\cite{Grz57,Lac55}.
In this model, a real number is represented as a converging sequence of rational numbers, this sequence can be encoded on an infinite tape that is bounded from the left. Hence, a real function $f$ is computable if there exists an algorithm that can produce a sequence representing $f(x)$ given a sequence representing $x$ . The essentials of recursive analysis are presented in section \ref{sec:recana}.
For a complete recent view of the area, the reader may refer to the monograph \cite{Wei00} or the tutorial \cite{BraHerWei08}.

There is no hope to unify all approaches of continuous computations. However, recent work by some authors illustrate that relationships can be established between what is computed using Shannon's {\GPAC} and recursive analysis \cite{JOC2007}. Also it was possible to give characterizations of computable functions, in the context of recursive analysis,
with function algebras {\`a} la Moore \cite{FundamentaInformatica2006}. See survey
\cite{CIEChapter2007} for more insights on the relative equivalences among different continuous models.
However, discussions have mainly focused on the computability level.
Connecting at the complexity level models that are known to be related at the computability level is an even more ambitious goal.
An immediate deep problem is that of defining complexity notions for some of the models such as the {\GPAC}.
The main difficulty of this line of research is that there is no common viewpoint of what continuous time and continuous space mean, as shown by several attempts \cite{Moo96,AMP95,Ruo94,CIEChapter2007}.

To illustrate this point note that no notion of complexity is defined
for the {\GPAC}. BSS and recursive analysis both have such a notion.
In the {\GPAC}, the continuity of time is responsible for this difficulty. In
recursive analysis and BSS, time is discrete, whereas space (in BSS) is
continuous, hence, the notion of time complexity exists in those two models.
However, as expected,
there is no relationship between what is polynomially computable for BSS and what is polynomially computable in recursive analysis.
In recursive analysis, complexity relates the precision needed for the function value
with both the time and the precision needed on the input value.
K-I. Ko \cite{Ko91} defines this notion of complexity as well as
the induced results on what it means to belong to a given complexity class and what the complexity of certain mathematical operators are.
Our aim is to characterize the complexity classes of real functions defined over arbitrary domains
using function algebras in the way of \cite{Moo96}. This line of research has already
produced significant results such as \cite{GCharacterizing,Ijuc} which respectively compares the notion of complexity over rational and real numbers and present a framework for characterizing complexity in recursive analysis over compact domains.

This paper is based on numerous recent developments in the field of computability in recursive analysis and extends or reproves those results. For example, the results from \cite{CMC02} or from \cite{FundamentaInformatica2006} are revisited and reproved as applications of our framework.
What this paper aims to provide can be seen from several viewpoints.
First, it is a step in the direction of implicit complexity for real computation.
In this context, it can be understood as a precursor for research in the line of
\cite{Hof99,Jones00,MM00} for the context of recursive analysis.
Second, it gives an algebraic characterization of what is computable in polynomial time
(as well as higher complexity classes) for real functions.
It is a porting of \cite{Cob65,BelCoo92,Clo98} to the reals.
In fact, as the main contribution is a framework for translating discrete characterizations into continuous ones, we use Bellantoni-and Cook characterization and plug it into the reals in order to capture
continuous polynomial time computability.
It would be immediate to have a similar characterization using Cobham's work and/or all the discrete function algebras presented in \cite{Clo98}.
Third, this paper ventures from computability to complexity. In that sense, this article is
similar to \cite{CamOja06} when applied to complexity.
Finally, this paper has a pedagogical value insofar as the algebraic
characterizations we provide make the domain of complexity over the reals clearer, where
such characterizations do not rely on any kind of machine definitions.
For example, they can give easier ways of proving that a given function is not polynomial time computable.
We believe that this is a very natural and intuitive paradigm that avoids discrete machinery (such as Type-2 Turing machines) when talking about continuous computation.

The research done in the current paper has started a couple of years ago. A preliminary
version of that work was presented at the Third International Workshop on Physics
and Computation and later published in the International Journal of Unconventional Computing \cite{Ijuc}.
The current paper is a comprehensive extension of \cite{Ijuc} as follows: (1) whereas the work in \cite{Ijuc}
is restricted to functions defined over the compact interval $[0,1]$, the current paper
considers real functions defined over any arbitrary domain, though, for simplicity, the proofs
are given only for functions defined over the whole real line and (2) many of the proofs in
\cite{Ijuc} are either eliminated or just outlined, whereas the current paper provides
a detailed complete proof of every result.

The paper is organized as follows. Section 1 is an introduction. Section 2 presents a quick review of the related work
done in the area of algebraically characterizing continuous complexity classes. Section 3 introduces the basic concepts
and results of recursive analysis, the framework for continuous computation that we adopt in this paper.
Section 4 is the core of the paper. It starts with a simple preliminary result in Subsection 4.1, followed, in Subsection 4.2,
by integer characterizations of special classes
of real functions, namely the Lipschitz functions and the locally poly-Lipschitz functions.
Subsection 4.3 avoids the Lipschitz assumptions and generalizes these results to all polynomial time computable real functions. In Section 5 we apply the obtained results to algebraically characterize some computability and complexity
classes of real functions. We first, Subsections 5.1 and 5.2, obtain some restatements and extensions of already known results using our framework. In Subsection 5.3 we  provide new results that give algebraic machine independent characterizations of polynomial time computable analysis functions.

\section{Related Work}

We prove our results by relating the notion of polytime (polynomial time) computable functions over the reals
(arbitrary domains) to the corresponding notion over the integers.
This gives a direct way to lift algebraic characterizations of integer computability and complexity to algebraic characterizations of the corresponding analog notions.
Our setting is actually proved to be robust to approximations. One does not need to be able to compute exactly the
computability and or complexity class over the integers, but only some defined approximation of it in order to be able to compute the corresponding class over the reals. This can be seen as a way to reformulate/reprove/reread very nicely some constructions already used in \cite{JOC2007,FundamentaInformatica2006}.

Hence, our framework gives a way to rely on algebraic machine-independent characterizations of computable functions over the integers. Several such characterizations are known \cite{Clo98}.
In particular, Kleene's functions are well known to capture exactly the integer functions computable by
Type-1 Turing machines. Cobham \cite{Cob65}, and later Bellantoni and Cook \cite{BelCoo92}, were among the first to propose algebraically defined classes of polytime computable integer functions. Our main theorem relies on Bellantoni and Cook's ideas in \cite{BelCoo92}.
Other machine independent characterizations of classical discrete computability and complexity classes (see survey \cite{Clo98}) could also be considered.

These results yield a direct understanding of how algebraic characterizations of computability and complexity classes over the integers can be lifted to algebraic characterizations of the corresponding classes over the reals.
Indeed when talking about computability and complexity over the integers, several machine-independent characterizations of computable functions are known \cite{Clo98}. In particular, Kleene's functions are well known to capture exactly the discrete functions computable by Turing machines.
Cobham \cite{Cob65}, and later Bellantoni and Cook \cite{BelCoo92}, were among the first to propose algebraically defined characterizations of polynomial time computable integer functions.
See the survey \cite{Clo98} for some other machine independent characterizations of classical computability and complexity classes.

Notice that our framework is different from the one proposed by Campagnolo and Ojakian in \cite{CamOja06};
in particular, it has the main advantage
of allowing to talk not only about the computability level but also about the complexity level.
As recursive analysis relies on Type-2 Turing machines,
it is natural to wonder whether the complexity results we obtain are in any way related to works such as
\cite{kapron1996,constable1973}, which characterize the Basic Feasible Functions BFFs which are analogous to polytime computable functions.
First, note that BFFs do not exactly correspond to polytime computable functions in the context of recursive analysis. They are in fact incomparable even if we restrain BFFs to functions that make sense
in the context of recursive analysis. Furthermore, the objects in question are not the same; only very specific sequences have sense in the context of recursive analysis and restricting Type-2 functions to those that represent real functions is far from easy. However, our work could be related to \cite{Isaac10} which seeks to characterize both BFFs and polytime computable functions in the sense of recursive analysis using polynomial interpretations.

\section{Essentials of Recursive Analysis}

\label{sec:recana}

In this section, we recall some basic definitions from recursive analysis: see \cite{Wei00,Ko91} for a full detailed presentation.
Let $\N,\Z,\Q,\R$ denote the set of natural numbers, integer numbers, rational numbers, and real numbers respectively.
Let $\mathbb{D}$ denote the set of dyadic numbers, that is, $\mathbb{D}=\{r\in\mathbb{Q}\colon r=\frac{a}{2^b},a\in \Z,b\in\N\}$.
These are the rationals with finite binary representation. For any real number $x$ let $|x|$ denote the absolute value of $x$.

\begin{definition}[Representation of real numbers]
Assume $x\in\R$. A Cauchy sequence representing $x$ is a function
$\varphi_x\colon\mathbb{N}\rightarrow\mathbb{D}$ that converges at a binary rate:
$\forall n\in\mathbb{N}\colon|x-\varphi_x(n)|\le2^{-n}$.
Given $x\in\R$, let $CF_x$ denote the class of Cauchy functions that represent $x$.
\end{definition}

Based on this representation we can now define the computability of functions over the real numbers.

\begin{definition}[Computability of real functions]
\label{dfn:computability_of_real_functions}
Assume a function\\ $f\colon D\subseteq\R\rightarrow\mathbb{R}$.
We say that $f$ is computable if there exists an oracle Turing machine $M^{^{()}}$ such that
for every $x\in D$, for every $\varphi_x\in CF_x$, and for every $n\in\mathbb{N}$ the following holds:

\begin{myequation}
|M^{^{\varphi_x}}(n) - f(x)|\le 2^{-n}
\end{myequation}
\end{definition}

For $x\not\in D$, and for any $\varphi_x\in CF_x$, the behavior of the machine $M^{^{\varphi_x}}$
is undefined. Note that for the rest of this article we only concentrate on two cases:
either $D=[0,1]$ or $D=\R$.

\begin{definition}[Polytime computability of real functions]
If $D=[0,1]$, then we say $f$ is \emph{polytime computable} if there exists a machine $M$ such that the computation time of $M^{^{\varphi_x}}(n)$ is bounded by $p(n)$
for some polynomial $p$. If $D=\R$, then we say $f$ is \emph{polytime computable}
if the computation time of $M^{^{\varphi_x}}(n)$ is bounded by $p(k,n)$
for some polynomial $p$ where $k=\min\{j\colon x\in[-2^j,2^j]\}$.
We will typically call $k$ the extension parameter and $n$ the precision parameter.
\end{definition}

As is evident from the above definitions, in the context of recursive analysis continuity is a necessary condition of computability, though
it is not sufficient. The following definition introduces the notion of `modulus of continuity'
which in some sense quantifies the concept of continuity and provides a useful tool
in the investigation of real continuous computation \cite{GCharacterizing}.

\begin{definition}[Modulus of continuity]
Consider a function $f\colon\R\to\R$. Then $f$ has a modulus of continuity if there exists a function
$m\colon\mathbb{N}^2\to\mathbb{N}$ such that for all $k,n\in\mathbb{N}$ and for all $x,y\in[-2^k,2^k]$ the following holds:

\begin{myequation}
if \; |x-y|\le 2^{-m(k,n)}, \; then \; |f(x)-f(y)|\le 2^{-n}
\end{myequation}

If $f$ is defined over $[0,1]$ the same definition holds except that the parameter $k$ is not necessary
anymore, that is $m\colon\N\to\N$.
\end{definition}

Notice that the existence of modulus of continuity for a function $f$ implies that this function
is continuous. In analogy with \cite[corollary 2.21]{Ko91}, polytime
computability over unbounded domains can be characterized as follows \cite{GCharacterizing}.

\begin{proposition}
\label{thm:characterizing_ptime_over_real_functions_case}
Assume a function $f\colon\R\rightarrow\mathbb{R}$.
Then $f$ is polytime computable iff there exist two integer functions:
$m:\mathbb{N}^2\rightarrow\mathbb{N}$ and $\psi:\mathbb{D}\times\mathbb{N}\rightarrow\mathbb{D}$
such that
\begin{enumerate}
\item{$m$ is a polynomial function and is a modulus for $f$,}
\item{$\psi$ is an approximation function for $f$, that is, for
every $d\in\mathbb{D}$ and $n\in\mathbb{N}$ the following holds:

\begin{myequation}
|\psi(d,n)-f(d)|\le 2^{-n}
\end{myequation}}
\item{$\psi(d,n)$ is computable in time $p(|d|+n)$ for some polynomial $p$.}
\end{enumerate}
\end{proposition}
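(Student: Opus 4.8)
The plan is to prove the equivalence by mirroring the argument behind \cite[Corollary 2.21]{Ko91}, which treats the case $D=[0,1]$; the only genuinely new ingredient is that the extension parameter $k$ must be handled uniformly. Throughout I will use, harmlessly, that all polynomial time bounds may be taken non-decreasing in each argument and dominating the identity, together with the standard facts about Ko's oracle Turing machine model relating running time to the magnitude of oracle queries and the sizes of oracle answers.

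\emph{From polytime computability to the existence of $m$ and $\psi$.} Let $M$ be an oracle machine computing $f$ in time $p(k,n)$. I would take as modulus $m(k,n)=p(k,n+1)+1$, which is a polynomial. To verify it, fix $k,n$ and $x,y\in[-2^k,2^k]$ with $|x-y|\le 2^{-m(k,n)}$; choose $\varphi\in CF_x$ (with the harmless sharper bound $|x-\varphi(j)|\le 2^{-(j+1)}$) and build $\tilde\varphi$ coinciding with $\varphi$ on all arguments $\le p(k,n+1)$ and equal, beyond that threshold, to any dyadic witnessing membership in $CF_y$. A one-line estimate using $|x-y|\le 2^{-(p(k,n+1)+1)}$ then shows $\tilde\varphi\in CF_y$. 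Since the runs $M^{\varphi}(n+1)$ and $M^{\tilde\varphi}(n+1)$ both take time at most $p(k,n+1)$, they query the oracle only at arguments $\le p(k,n+1)$, where $\varphi$ and $\tilde\varphi$ agree; hence they are identical and return a common value $a$, so $|f(x)-f(y)|\le|f(x)-a|+|a-f(y)|\le 2^{-n}$. For the approximation function I would set $\psi(d,n)=M^{\varphi_d}(n)$, where $\varphi_d$ is the constant sequence equal to $d$ (which lies in $CF_d$): correctness is immediate, and simulating $M$ while answering every oracle call with the fixed dyadic $d$ runs in time polynomial in $|d|$ and $n$, i.e. within $p'(|d|+n)$ for a suitable polynomial $p'$.

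\emph{From $m$ and $\psi$ to polytime computability.} Given a polynomial modulus $m$ and a polytime approximation function $\psi$, I would let $M^{\varphi_x}(n)$ first query $\varphi_x(0)$ to obtain $c$ with $|x-c|\le 1$, and compute from $c$ an integer $k'$ with $x\in[-2^{k'},2^{k'}]$ and $k'\le k+O(1)$ (so that polynomials in $k'$ are polynomials in $k$); then query $\varphi_x\bigl(m(k',n+1)\bigr)$ to obtain a dyadic $d$ with $|x-d|\le 2^{-m(k',n+1)}$, and — after enlarging $k'$ by a constant so that $d$ too lies in $[-2^{k'},2^{k'}]$ — invoke the modulus property to get $|f(x)-f(d)|\le 2^{-(n+1)}$; finally output $\psi(d,n+1)$, which is within $2^{-(n+1)}$ of $f(d)$, for a total error at most $2^{-n}$. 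The running-time bookkeeping is then routine: $m$ is a polynomial, so $m(k',n+1)$ is polynomially bounded in $k,n$ and can be produced in polynomial time; the returned $d$ has size polynomial in $k,n$; and $\psi(d,n+1)$ is then computed in time $p(|d|+n+1)$, once more polynomial in $k,n$. Hence $M^{\varphi_x}(n)$ runs in time polynomial in $k$ and $n$, as required.

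The step I expect to be the main obstacle is making the first implication airtight: it rests on the standard but slightly technical fact that a $t$-step oracle computation in Ko's model consults the oracle only at arguments $\le t$ (and reads only dyadics of size $O(t)$), and it requires checking that the modulus extracted this way is genuinely uniform over $[-2^k,2^k]$ — that is, depends only on $k$ and not on the individual extension parameters of $x$ and $y$ — which is precisely what monotonicity of $p$ secures. In the converse direction the only delicate point is that $M$ is not given $k$ and must recover a tight enough estimate $k'$ from $\varphi_x(0)$ without letting any of the derived quantities (the precision $m(k',n+1)$, the size of $d$, the running time of $\psi$) escape polynomial control in $k$ and $n$.
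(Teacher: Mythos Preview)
Your argument is correct on both directions, and the converse ($m,\psi\Rightarrow$ polytime) is essentially identical to the paper's. Where you genuinely diverge is in extracting the polynomial modulus from the oracle machine. The paper proceeds indirectly via Heine--Borel: for each $x\in[-2^k,2^k]$ it records the largest index $n_x$ queried by $M^{\varphi_x^*}(n+3)$ with $\varphi_x^*(j)=\lfloor 2^jx\rfloor/2^j$, covers $[-2^k,2^k]$ by the intervals $(\varphi_x^*(n_x)-2^{-n_x},\varphi_x^*(n_x)+2^{-n_x})$, passes to a finite subcover $\{x_1,\dots,x_w\}$, sets $m'(k,n)=\max_i n_{x_i}$, and only then invokes the running-time bound to conclude $m'(k,n)\le q(k,n+3)$. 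Your oracle-swapping construction bypasses the compactness detour entirely and reads off $m(k,n)=p(k,n+1)+1$ directly. Since both proofs ultimately rest on the same fact you flag --- that a $t$-step oracle computation in Ko's model queries only indices $\le t$ --- your route is strictly more elementary here. The Heine--Borel argument is the classical device for producing \emph{some} modulus when no uniform time bound is given (the pure computability setting), but in the polytime setting it adds nothing beyond what your direct argument already delivers.
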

\begin{proof}
The proof is an extension of the proof of Corollary 2.21 in \cite{Ko91}.
Assume the existence of $m$ and $\psi$ that satisfy the given conditions. Without loss
of generality assume that $m(k,n)=(k+n)^a$ for some $a\in\mathbb{N}$.
Assume an $f$-input $x\in\R$ and let $\varphi\in CF_x$. Assume $n\in\mathbb{N}$.
Let $M^{^{\varphi}}(n)$ be an oracle Turing machine that does the following:
\begin{enumerate}
\item{let $d=\varphi(2)$,}
\item{from $d$ determine the least $k$ such that $x\in[-2^k,2^k]$,}
\item{let $\alpha=m(k,n+1)$,}
\item{let $d'=\varphi(\alpha)$,}
\item{let $e=\psi(d',n+1)$ and output $e$.}
\end{enumerate}

Note that every step of the above procedure can be performed in polynomial time
with respect to both $k$ and $n$. Now
verifying the correctness of $M^{^{\varphi}}(n)$:

\begin{myequation}
|e-f(x)|&\le|e-f(d')|+|f(d')-f(x)|\\
&\le 2^{-(n+1)}+|f(d')-f(x)|,\qquad \mbox{by definition of $\psi$}\\
& \le 2^{-(n+1)}+2^{-(n+1)},\qquad \mbox{$|d'-x|\le 2^{-m_k(n+1)}$ and definition of $m$}\\
& =2^{-n}
\end{myequation}
This completes the first part of the proof.  Now assume $f$ is polytime computable.
In the following we adopt the following notation: for every $x\in\R$ let $\varphi_x^*(n)=\frac{\lfloor 2^nx\rfloor}{2^n}$.
Fix some large enough $k$ and consider any $x\in\R$ such that $len(\lfloor x\rfloor)=k$ ($len(j)$ denotes the length
of the binary representation of the integer $j$), hence
$x\in[-2^k,2^k]$. Since $f$ is polytime computable, there exists an oracle Turing machine $M^{^{()}}$ such that the computation time of
$M^{^{\varphi_x^*}}(n)$ is bounded by $q(k,n)$ for some polynomial $q$.
Fix some large enough $n\in\mathbb{N}$.
Let
\begin{myequation}
n_x=\max\{j\colon\varphi_x^*(j)\textit{ is queried during the computation of }
M^{^{\varphi_x^*}}(n+3)\}
\end{myequation}

Let $d_x=\varphi_x^*(n_x)$. Note that $\varphi_{d_x}^*(j)=\varphi_x^*(j)$ for every $j\le n_x$.
Let $\ell_x=d_x-2^{-n_x}$ and $r_x=d_x+2^{-n_x}$.
Then $\{(\ell_x,r_x)\colon x\in[-2^k,2^k]\}$ is an \emph{open covering} of
the compact interval $[-2^k,2^k]$. By the \emph{Heine-Borel Theorem}, $[-2^k,2^k]$ has a finite covering
$\mathcal{C}=\{(\ell_{x_i},r_{x_i})\colon i=1,\ldots,w\}$. Note that $x_i\in[-2^k,2^k]$ for $i=1,\ldots,w$.
Define $m'\colon\mathbb{N}^2\to\mathbb{N}$ by

\begin{myequation}
m'(k,n)=\max\{n_{x_i}\colon i=1,\ldots,w\}
\end{myequation}

We need to show that $m'$ is a polynomial modulus for $f$.
From the above assumptions,
$n_x\le q(k,n+3)$ which is polynomial in $k,n$. Now assume some $x,y\in[-2^k,2^k]$ such that $x<y$ and $|x-y|\le 2^{-m'_k(n)}$.\\
\noindent\underline{case 1:} $x,y\in(\ell_{x_i},r_{x_i})$ for some $i\in\{1,\ldots,w\}$.
Then $|x-d_{x_i}|< 2^{-n_{x_i}}$ which implies that $\varphi_x^*(j)=\varphi_{d_{x_i}}^*(j)$ for every $j\le n_{x_i}$,
hence $M^{^{\varphi_x^*}}(n+3)=M^{^{\varphi_{d_{x_i}}^*}}(n+3)$.
Now
\begin{myequation}
|f(x)-f(d_{x_i})|&\le|f(x)-M^{^{\varphi_x^*}}(n+3)|+|M^{^{\varphi_x^*}}(n+3)-f(d_{x_i})|\\
&=|f(x)-M^{^{\varphi_x^*}}(n+3)|+|M^{^{\varphi_{d_{x_i}}^*}}(n+3)-f(d_{x_i})|\\
&\le 2^{-(n+3)}+2^{-(n+3)}\\
&=2^{-(n+2)}
\end{myequation}

Similarly, we can deduce that $|f(y)-f(d_{x_i})|\le 2^{-(n+2)}$. Hence, $|f(x)-f(y)|\le|f(x)-f(d_{x_i})|+|f(d_{x_i})-f(y)|
\le 2^{-(n+2)}+2^{-(n+2)}=2^{-(n+1)}$.\\
\noindent\underline{case 2:} There is no $i$ such that $x,y\in(\ell_{x_i},r_{x_i})$. Notice that $\mathcal{C}$
is a covering and by assumption
$|x-y|\le\min\{\frac{1}{2}(r_{x_i}-\ell_{x_i})\colon i=1,\ldots,w\}$. Hence, there must exist $i,j$ such that
$x\in(\ell_{x_i},r_{x_i})$, $y\in(\ell_{x_j},r_{x_j})$, and $\ell_{x_j}<r_{x_i}$. Choose an
arbitrary $z\in(\ell_{x_j},r_{x_i})$.

Then
\begin{align*}
|f(x)-f(y)|&\le |f(x)-f(z)|+|f(z)-f(y)|\\
& \le 2^{-(n+1)}+|f(z)-f(y)|,\quad\textit{applying case 1 to $x,z\in(\ell_{x_i},r_{x_i})$}\\
&\le 2^{-(n+1)}+2^{-(n+1)},\quad\textit{applying case 1 to $y,z\in(\ell_{x_j},r_{x_j})$}\\
&=2^{-n}
\end{align*}

Hence, $m'$ is a polynomial modulus function for $f$. The approximation function can be defined as follows:
for $d\in\mathbb{D}$ and $n\in\N$, let $\psi(d,n)=M^{^{\varphi_d^*}}(n)$.
This completes the proof of the theorem.
\end{proof}

\section{Characterizing Polytime Real Complexity}

In this section, we prove that it is possible to relate polytime computability over the reals to polytime computability over the integers.
We do that in two steps. In the first step, we consider the special case of Lipschitz functions.
In the second step, we discuss how to avoid the Lipschitz hypothesis, and consider general functions. 
Let's first provide a preliminary result to help explaining what we would like to get.

\subsection{A preliminary first result}

A function over the real line can be characterized by the integer projection
of a function over the real plane.
The extra dimension can be viewed as representing the precision of the computed approximation.

\begin{proposition}[Complexity over $\R$ vs. Complexity over $\R^2$]
\label{prop:char_comp_over_R_2_comp}
The following are equivalent:
\begin{enumerate}
\item{a function $f:\R \to \mathbb{R}$ is polytime computable,}
\item{there exists a polytime computable function $g:\R \times \RP \to \R$ such that the following holds
\begin{equation}
\label{eq:4}
\forall x\in\R,\;\forall y\in\N\colon|g(x,y)-yf(x)|\le 1
\end{equation}}
\end{enumerate}
\end{proposition}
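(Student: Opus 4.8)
The plan is to prove the two implications separately, each by an explicit construction.

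For $(1)\Rightarrow(2)$, the natural candidate is simply $g(x,y)=y\,f(x)$: it satisfies $|g(x,y)-yf(x)|=0\le 1$ for every $x\in\R$ and every $y\in\N$ (in fact for every $y\in\R$), so the only real work is to check that $g$ is polytime computable as a map $\R\times\R\to\R$. I would do this using Proposition~\ref{thm:characterizing_ptime_over_real_functions_case} and its evident two-variable analogue: from a polynomial modulus $m$ and a polytime dyadic approximation $\psi$ for $f$, I build the corresponding data for $g$. For the approximation I would take $\psi_g(d_1,d_2,n)=d_2\cdot\psi\bigl(d_1,\,n+\len{d_2}+1\bigr)$; since $|d_2|\le 2^{\len{d_2}}$, this is within $2^{-n}$ of $d_2\,f(d_1)$, and it is computable in time polynomial in $\len{d_1}+\len{d_2}+n$. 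For the modulus I would use $m_g(k,n)=\max\{m(k,n+k+1),\,n+1+p(k)\}$, where $2^{p(k)}$ bounds $\sup_{|x|\le 2^k}|f(x)|$; such a polynomial $p$ exists because a polynomial modulus for $f$ forces $|f(x)|\le|f(0)|+2^{k+m(k,0)+1}$ on $[-2^k,2^k]$. (Alternatively, one may just invoke closure of polytime real functions under composition and multiplication.)

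For $(2)\Rightarrow(1)$, I would exploit exactly the rescaling suggested by inequality~(\ref{eq:4}). To approximate $f(x)$ within $2^{-n}$, set $y_0=2^{n+1}\in\N$; then $|g(x,y_0)-y_0 f(x)|\le 1$ gives $|g(x,y_0)/y_0-f(x)|\le 2^{-(n+1)}$. Hence it suffices to compute an approximation $a$ of $g(x,y_0)$ to the constant precision $2^{-1}$ and to output $a/2^{n+1}$, which yields $|a/2^{n+1}-f(x)|\le 2^{-(n+2)}+2^{-(n+1)}\le 2^{-n}$. This runs in polynomial time: if $x\in[-2^k,2^k]$ then the pair $(x,2^{n+1})$ lies in $[-2^{k'},2^{k'}]^2$ with $k'=\max\{k,n+1\}$, so the polytime machine for $g$ needs only time polynomial in $k'$ — hence polynomial in $k$ and $n$ — to reach precision $2^{-1}$; supplying the exact dyadic Cauchy function for the integer $2^{n+1}$ and performing the final division by $2^{n+1}$ (a shift of the binary point) cost negligibly more.

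I expect the precision and running-time bookkeeping to be routine. The one point that genuinely deserves care is in $(1)\Rightarrow(2)$: on $[-2^k,2^k]$ the value $|f(x)|$ is bounded only by $2^{p(k)}$ for a polynomial $p$, not by a constant, so the modulus and the approximation for the product $y\,f(x)$ must absorb this polynomially bounded growth — which they do, as indicated. The converse direction presents no comparable obstacle once one notices that the joint extension parameter of $(x,2^{n+1})$ is exactly $\max\{k,n+1\}$.
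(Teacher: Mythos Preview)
Your proposal is correct and follows essentially the same approach as the paper: for $(1)\Rightarrow(2)$ you take $g(x,y)=y\,f(x)$ (the paper simply invokes closure of polytime real functions under multiplication, which you note as an alternative), and for $(2)\Rightarrow(1)$ you approximate $g(x,2^{n+1})$ to constant precision and divide by $2^{n+1}$, exactly as the paper does via an oracle-machine simulation. Your additional care in $(1)\Rightarrow(2)$ about the polynomial growth of $|f|$ on $[-2^k,2^k]$ is a useful elaboration that the paper leaves implicit.
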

\begin{proof}
$(1)\Rightarrow(2):$ is obtained directly by letting $g(x,y)=yf(x)$. From the hypothesis $f$ is
polytime computable and multiplication can be done in polynomial time, hence, $g(x,y)$
is polytime computable. Clearly, Equation \eqref{eq:4} holds.\\
\noindent$(2)\Rightarrow(1):$ Since $g$ is polytime computable, there exists an oracle
machine $N^{^{()}}$ that computes $g$ in polynomial time.
Assume an input $x\in\R$ and a Cauchy sequence $\varphi_x\in CF_x$.
Assume $n\in\mathbb{N}$ and consider an oracle machine $M^{^{\varphi_x}}(n)$ which does the following:
\begin{enumerate}
\item{Simulate the computation of $N^{^{\varphi_x,\varphi_y}}(0)$, (for $\varphi_y(i)=2^{n+1}$)
\begin{enumerate}
\item{whenever $N^{^{()}}$ queries $\varphi_x(i)$, $M^{^{()}}$ queries its own oracle with the same argument
$i$ and returns
$d=\varphi_x(i)$,}
\item{whenever $N^{^{()}}$ queries $\varphi_y(j)$, $M^{^{()}}$ returns $2^{n+1}$,}
\end{enumerate}}
\item{Repeat the last step as long as $N^{^{()}}$ keeps querying,}
\item{Let $e$ be the output of the simulation of $N^{^{()}}$,}
\item{Output $2^{-(n+1)}e$.}
\end{enumerate}

From this procedure we have

\begin{myequation}
\label{eqn:prop:different_way_of_characterizing_computability_over_R_eqn_1}
&|e-g(x,2^{n+1})|\le 1\\
&|2^{-(n+1)}e-2^{-(n+1)}g(x,2^{n+1})|\le 2^{-(n+1)}
\end{myequation}

From the proposition hypothesis:

\begin{myequation}
\label{eqn:prop:different_way_of_characterizing_computability_over_R_eqn_2}
&|g(x,2^{n+1})-2^{n+1}f(x)|\le 1\\
&|2^{-(n+1)}g(x,2^{n+1})-f(x)|\le 2^{-(n+1)}
\end{myequation}

Then
\begin{align*}
|M^{^{\varphi_x}}(n)-&f(x)|\le\\
&|M^{^{\varphi_x}}(n)-2^{-(n+1)}g(x,2^{n+1})|+|2^{-(n+1)}g(x,2^{n+1})-f(x)|\\
&=|2^{-(n+1)}e-2^{-(n+1)}g(x,2^{n+1})|+|2^{-(n+1)}g(x,2^{n+1})-f(x)|\\
&\le 2^{-(n+1)}+|2^{-(n+1)}g(x,2^{n+1})-f(x)|\qquad\textit{from Eq.
\eqref{eqn:prop:different_way_of_characterizing_computability_over_R_eqn_1}}\\
&\le2^{-(n+1)}+2^{-(n+1)}\qquad\textit{from Eq.
\eqref{eqn:prop:different_way_of_characterizing_computability_over_R_eqn_2}}\\
&\le 2^{-n}
\end{align*}

Hence, the machine $M^{^{\varphi_x}}$ correctly computes the function $f(x)$. Furthermore,
it is clear that $M^{^{\varphi_x}}(n)$ operates in polytime in terms of the precision parameter $n$ and the length of $\lfloor x\rfloor$.
\end{proof}

We would like to talk about functions $g$ with assertions like the above but quantification is only done over the integers.
That is to say about assertions like \eqref{eq:4} but with something like $\forall x \in \Z$ instead of
$\forall x \in \R$. Moving to such full integer characterization we are faced with the problem
of how the notion of continuity of real functions
can be transferred to the domain of integers.

\subsection{Lipschitz functions}

For Lipschitz functions this is facilitated by
the fact that such functions provide us with free information about their continuity properties.

\begin{definition}[Lipschitz functions]
\label{dfn:lipschitz_fun}
A real function $f\colon D\subseteq\R\to\R$ is Lipschitz if there exists a constant $K\ge 0$
such that for all $x_1,x_2\in D$ the following holds:
\begin{myequation}
|f(x_1)-f(x_2)|\le K|x_1-x_2|
\end{myequation}
\end{definition}

Then we have the following characterization of polytime computable Lipschitz functions.

\begin{proposition}[Complexity over $\R$ vs Complexity over $\R^2$]
\label{prop:maversion_comp}
Fix any arbitrary constant $\epsilon\ge0$.
Let $f:\R\to\R$ be a Lipschitz function. Then the following are equivalent:
\begin{enumerate}
\item{$f$ is polytime computable,}
\item{there exists a polytime computable function $g\colon\R\times \R\to\R$ such
that:
\begin{equation} \label{eq:etoileepsilon}
\forall x\in \Z,\forall y\in\N^{\ge 1}\colon|g(x,y)-yf(\frac{x}{y})|\le \epsilon
\end{equation}}
\end{enumerate}
\end{proposition}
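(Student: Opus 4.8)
The plan is to prove Proposition~\ref{prop:maversion_comp} by reducing it to the already-established Proposition~\ref{prop:char_comp_over_R_2_comp}, exploiting the extra regularity that the Lipschitz hypothesis provides. Throughout, let $K\ge 0$ be a Lipschitz constant for $f$.

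\textbf{From (1) to (2).} Suppose $f$ is polytime computable. The natural candidate is essentially $g(x,y)=yf(x/y)$, but this needs to be massaged so that it is genuinely polytime computable as a function $\R\times\R\to\R$ in the sense of the definition (the extension parameter must be controlled). First I would observe that by Proposition~\ref{prop:char_comp_over_R_2_comp} there is a polytime $h\colon\R\times\R\to\R$ with $|h(x,z)-zf(x)|\le 1$ for all $x\in\R$, $z\in\N$. Division $x/y$ is polytime computable away from $y=0$; composing, $(x,y)\mapsto f(x/y)$ is polytime on the region $y\ge 1$, and then $g(x,y)=yf(x/y)$ (with any convenient extension to $y<1$) is polytime since multiplication is. For $x\in\Z$ and $y\in\N^{\ge1}$ we get $|g(x,y)-yf(x/y)|$ as small as we like, in particular $\le\epsilon$ (and indeed $\le 0$ if we take the literal product; a small-error version of $g$ suffices when $\epsilon>0$). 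One must check that the extension parameter for $g$ stays polynomially bounded: when $x\in[-2^j,2^j]$ and $y\in[-2^j,2^j]$, the argument $x/y$ need not lie in a bounded region, so here I would use the Lipschitz bound $|f(x/y)|\le |f(0)|+K|x/y|$ to see that $g$ takes values of magnitude at most roughly $|y|(|f(0)|+K|x|)$, which has length polynomial in $j$, so the output precision demanded is fine; the internal call to the approximation scheme for $f$ at argument $x/y$ costs time polynomial in the bit-length of a dyadic approximation of $x/y$ and the requested precision, all polynomial in $j$ and $n$.

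\textbf{From (2) to (1).} This is the direction where the Lipschitz hypothesis does real work, since the quantifier in~\eqref{eq:etoileepsilon} ranges only over \emph{integers} $x$, yet we must recover $f$ on all of $\R$. The idea is: given $g$ as in~(2), define $\tilde g(x,z)$ for $x\in\R$, $z\in\N$ by $\tilde g(x,z) = g(\lfloor xz\rfloor, z)$ (or a rounded dyadic version thereof), so that $\tilde g(x,z)$ is polytime computable from a Cauchy representation of $x$ and the value $z$, because $\lfloor xz\rfloor$ is an integer extractable in polytime from enough precision on $x$ and the integer $z$. Then estimate
\begin{myequation}
|\tilde g(x,z) - zf(x)| \le |g(\lfloor xz\rfloor,z) - zf(\tfrac{\lfloor xz\rfloor}{z})| + z\,|f(\tfrac{\lfloor xz\rfloor}{z}) - f(x)| \le \epsilon + z\cdot K\cdot\tfrac{1}{z} = \epsilon + K,
\end{myequation}
using~\eqref{eq:etoileepsilon} for the first term and the Lipschitz bound $|\lfloor xz\rfloor/z - x|\le 1/z$ for the second. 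Thus $\tilde g$ satisfies a bound of the shape $|\tilde g(x,z)-zf(x)|\le C$ for a constant $C=\epsilon+K$. To apply Proposition~\ref{prop:char_comp_over_R_2_comp} I then rescale: set $g'(x,z) = \tilde g(x,\lceil C\rceil z)/\lceil C\rceil$ (if $C=0$ there is nothing to do), which is still polytime and now satisfies $|g'(x,z)-zf(x)|\le 1$ for $x\in\R$, $z\in\N$. Proposition~\ref{prop:char_comp_over_R_2_comp} then yields that $f$ is polytime computable.

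\textbf{Main obstacle.} The routine direction is (1)$\Rightarrow$(2); the delicate point throughout is bookkeeping of the two complexity parameters (extension parameter $k$ and precision parameter $n$) so that every intermediate object is genuinely polytime in the sense of the definition over $\R$ — in particular, in (2)$\Rightarrow$(1), verifying that extracting $\lfloor xz\rfloor$ and feeding it to $g$ does not blow up the extension parameter (it does not: if $x\in[-2^k,2^k]$ then $\lfloor xz\rfloor\in[-2^{k+\ell},2^{k+\ell}]$ where $\ell$ is the length of $z$, still polynomially bounded), and that the Lipschitz constant $K$, being a \emph{constant}, can be absorbed into the additive error without affecting polynomiality. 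I expect the absorption-of-constants rescaling step and the precise polytime-computability argument for $g$ in direction (1)$\Rightarrow$(2) (handling the unbounded argument $x/y$ via the Lipschitz growth bound) to require the most care, though neither is conceptually hard.
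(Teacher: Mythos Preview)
Your overall strategy for $(2)\Rightarrow(1)$ --- approximate $x$ by a rational $m/z$ with $m\in\Z$, invoke~\eqref{eq:etoileepsilon} at $(m,z)$, and use the Lipschitz bound to control $|f(m/z)-f(x)|$ --- is exactly the idea the paper uses. But your plan to package this as a polytime computable function $\tilde g\colon\R\times\R\to\R$ and then invoke Proposition~\ref{prop:char_comp_over_R_2_comp} does not go through as stated. The map $(x,z)\mapsto g(\lfloor xz\rfloor,z)$ is \emph{discontinuous} in $x$, hence not computable in recursive analysis at all; in particular the claim that ``$\lfloor xz\rfloor$ is an integer extractable in polytime from enough precision on $x$'' is false --- no amount of precision on a Cauchy name for $x$ lets you decide the floor when $xz$ happens to be an integer. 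So Proposition~\ref{prop:char_comp_over_R_2_comp}, which requires a genuine polytime computable (hence continuous) $g'\colon\R\times\R\to\R$, cannot be applied to your $\tilde g$ or its rescaling.

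The paper sidesteps this by \emph{not} trying to go through Proposition~\ref{prop:char_comp_over_R_2_comp}: it builds the oracle machine $M^{\varphi_x}(n)$ directly. Given $\varphi_x$, it queries $d=\varphi_x(n')$ for $n'=n+2+a$ (where $K\le 2^a$), writes $d=k_1/2^{n'}$ with $k_1\in\Z$, simulates the machine for $g$ at the \emph{integer} pair $(k_1,2^{n'})$ to precision $0$, and outputs $2^{-n'}e$. The Lipschitz bound then gives $|f(k_1/2^{n'})-f(x)|\le 2^a\cdot 2^{-n'}=2^{-(n+2)}$, and combined with~\eqref{eq:etoileepsilon} and the precision-$0$ error you get $|2^{-n'}e-f(x)|\le 2^{-n}$. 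This is morally your computation, but carried out at the level of oracle machines rather than real-valued functions, so the discontinuity of the floor never bites: the dyadic $d$ comes from the oracle, and taking its numerator $k_1$ is a finite computation on a finite string. Your direction $(1)\Rightarrow(2)$ is fine and matches the paper's (which simply takes $g(x,y)=yf(x/y)$ for $y\ge1$, piecewise linear elsewhere, without fussing over the extension parameter).
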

\begin{proof}
\noindent$(1)\Rightarrow(2):$ Assume $f$ is polytime computable. Define $g$ as follows:

\begin{myequation}
g(x,y)=
\begin{cases}
0 & y=0 \\
y f(\frac{x}{y}) & y \in\N^{\ge1}\\
piecewise\;linear & otherwise
\end{cases}
\end{myequation}

$g$ is polytime computable since $f$ is polytime computable and arithmetic operations
on the reals are polytime computable. Clearly, $g$ satisfies Eq. \eqref{eq:etoileepsilon}.\\

\noindent $(2)\Rightarrow(1):$ Assume that (2) holds and for simplicity assume $\epsilon=1$.
Let $K$ be a Lipschitz constant for $f$ (see Definition \ref{dfn:lipschitz_fun}).
Let $a\in\N$
such that $K\le2^a$. Hence, for all $x,y\in\R$ the following holds: $|f(x)-f(y)|\le 2^a|x-y|$.
Since $g$ is polytime computable, there exists an oracle machine $N^{^{()}}$ which computes $g$
in polynomial time. Assume an input $x\in\R$ and a Cauchy function $\varphi_x\in CF_x$.
Assume $n\in\mathbb{N}$
and consider an oracle machine $M^{^{\varphi_x}}(n)$ that does the following:
\begin{enumerate}
\item{Let $n'=n+2+a$ and let $d=\varphi_x(n')$,}
\item{Then $|d-x|\le2^{-n'}$, hence, it can be assumed without loss of generality that,
$d=\frac{k_1}{2^{n'}}$ for some $k_1\in\Z$,}
\item{Simulate the operation of $N^{^{\varphi_x,\varphi_y}}(0)$: (for $\varphi_y(j)=2^{n'}$)
\begin{enumerate}
\item{whenever $N^{^{()}}$ queries $\varphi_x(i)$, $M^{^{()}}$ returns $k_1$,}
\item{whenever $N^{^{()}}$ queries $\varphi_y(j)$, $M^{^{()}}$ returns $2^{n'}$,}
\end{enumerate}}
\item{Repeat the last step as long as $N^{^{()}}$ keeps querying,}
\item{Let $e$ be the output of $N^{^{()}}$,}
\item{Output $2^{-n'}e$.}
\end{enumerate}

It can be easy seen that the computation time of $M^{^{\varphi_x}}(n)$ is
bounded by a polynomial in terms of $n$ and $k$, where
$n$ is the precision parameter and $k$ is the least positive integer such that
$x\in [-2^k,2^k]$. Now we want to verify that $M^{^{\varphi_x}}(n)$ computes
$f(x)$ within a precision $2^{-n}$. From the computation of $N^{^{()}}$ we have

\begin{myequation}
\label{eqn:char_over_R_3_1}
&|e-g(k_1,2^{n'})|\le 1\\
&|2^{-n'}e-2^{-n'}g(k_1,2^{n'})|\le 2^{-n'}
\end{myequation}

From Eq. \eqref{eq:etoileepsilon} with $\epsilon=1$

\begin{myequation}
\label{eqn:char_over_R_3_2}
&|g(k_1,2^{n'})-2^{n'}f(\frac{k_1}{2^{n'}})|\le 1\\
&|2^{-n'}g(k_1,2^{n'})-f(\frac{k_1}{2^{n'}})|\le 2^{-n'}
\end{myequation}

From Eq. \eqref{eqn:char_over_R_3_1} and \eqref{eqn:char_over_R_3_2} we have
\begin{myequation}
\label{eqn:char_over_R_3_3}
|2^{-n'}e-f(\frac{k_1}{2^{n'}})|\le 2^{-(n'-1)}
\end{myequation}

From the fact that $f$ is Lipschitz we have
\begin{myequation}
\label{eqn:char_over_R_3_4}
|f(\frac{k_1}{2^{n'}})-f(x)|\le 2^a2^{-n'}=2^{-(n+2)}
\end{myequation}

From Eq. \eqref{eqn:char_over_R_3_3} and \eqref{eqn:char_over_R_3_4}  we have

\begin{myequation}
|2^{-n'}e-f(x)|\le 2^{-(n'-1)}+2^{-(n+2)}\le 2^{-n}
\end{myequation}

This completes the proof that $f$ is polytime computable.

\end{proof}

The previous proposition can be generalized to locally Lipschitz functions as follows.

\begin{definition}[Locally Lipschitz functions]
Assume a function $f\colon\R^k\to\R$.
\begin{enumerate}
\item{We say that $f$ is locally Lipschitz if $f$ is Lipschitz on every compact subset of its domain. That is,
for
every compact set $C\subseteq\R^k$ there exists a constant $K_C$ such that for every $\vec{x},\vec{y}\in C$ the following holds:
\begin{myequation}
|f(\vec{x})-f(\vec{y})| \le K_c \norm{\vec{x}-\vec{y}}
\end{myequation}
$\norm{\cdot}$ is any norm, for example, the Euclidean norm.}
\item{Let $C_i = \bar{B}(\vec{0},2^i)$, where $\bar{B}(\vec{0},2^i)$ is the closed ball centered at the origin
$\vec{0}$
with radius $2^i$.
We say that $f$ is locally poly-Lipschitz if $f$ is locally Lipschitz and there exists a sequence of Lipschitz
constants $\{K_{C_i}\in\N\}_{i\in\N}$ that is polytime computable; that is, there exists a Turing machine $M(i)$ that uniformly computes $K_{C_i}$
in time $p(i)$, where $p$ is a polynomial function.}
\end{enumerate}
\end{definition}

Then we have the following version of Proposition \ref{prop:maversion_comp} for locally poly-Lipschitz functions.

\begin{proposition}[Complexity over $\R$ vs Complexity over $\R^2$]
\label{prop:maversion_comp_2}
Fix any arbitrary constant $\epsilon\ge0$. Let $f:\R\to\R$ be a locally poly-Lipschitz function. Then the following are equivalent:
\begin{enumerate}
\item{$f$ is polytime computable,}
\item{there exists a polytime computable function $g\colon\R\times \R\to\R$ such
that:
\begin{equation}
\label{eq:etoileepsilon_1}
\forall x\in \Z,\forall y\in\N^{\ge 1}\colon|g(x,y)-yf(\frac{x}{y})|\le \epsilon
\end{equation}}
\end{enumerate}
\end{proposition}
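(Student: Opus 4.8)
The plan is to follow the proof of Proposition~\ref{prop:maversion_comp} almost line by line, the only new ingredient being that its single global Lipschitz constant is replaced by the polytime-computable family $\{K_{C_i}\}_{i\in\N}$ provided by local poly-Lipschitzness. For the implication $(1)\Rightarrow(2)$ I would reuse exactly the same $g$ as before: $g(x,0)=0$, $g(x,y)=yf(x/y)$ for $y\in\N^{\ge 1}$, and $g$ piecewise linear in $y$ between consecutive integers otherwise. This direction never used the Lipschitz hypothesis, and the polytime-computability argument is unchanged: since $|x/y|\le|x|$ for $y\ge1$, the extension parameter of $x/y$ is controlled by that of $x$, and multiplying by $y$ costs only an additive $\log_2 y$ in the precision parameter, so $g$ is polytime computable on $\R\times\R$ and Equation~\eqref{eq:etoileepsilon_1} holds by construction.

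For $(2)\Rightarrow(1)$ I would again take $\epsilon=1$ without loss of generality, let $N^{^{()}}$ be an oracle machine computing $g$ in polynomial time, and fix a polynomial $p$ such that $K_{C_i}$ is computed in time $p(i)$, hence $K_{C_i}\le 2^{p(i)}$. The single genuinely new step is that the machine $M^{^{\varphi_x}}(n)$ must \emph{first} read enough of $\varphi_x$ to recover the extension parameter $k$ (the least $k\ge1$ with $x\in[-2^k,2^k]$); where Proposition~\ref{prop:maversion_comp} fixed the constant $a$ with $2^a\ge K$ once and for all, here I would set $a=p(k+1)$, which is polynomial in $k$ and satisfies $K_{C_{k+1}}\le 2^a$. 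From there the construction is identical: put $n'=n+2+a$, query $d=\varphi_x(n')=k_1/2^{n'}$ with $k_1\in\Z$, simulate $N^{^{\varphi_x,\varphi_y}}(0)$ with $\varphi_y$ the constant sequence $2^{n'}$ (answering $\varphi_x$-queries with the exact dyadic $k_1/2^{n'}$ and $\varphi_y$-queries with $2^{n'}$), let $e$ be the output, and return $2^{-n'}e$. Since the input point $(k_1,2^{n'})$ has extension parameter $O(n'+k)$ and the requested precision is $0$, the simulation --- and hence all of $M$ --- runs in time polynomial in $n$ and $k$.

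The correctness estimate is the same chain as before, with one modification: both $x$ and $d=k_1/2^{n'}$ lie in $[-2^{k+1},2^{k+1}]=C_{k+1}$ (because $|d-x|\le 2^{-n'}\le 1$), so the \emph{local} Lipschitz bound on $C_{k+1}$ gives $|f(k_1/2^{n'})-f(x)|\le K_{C_{k+1}}2^{-n'}\le 2^{a-n'}=2^{-(n+2)}$, which plays the role of Equation~\eqref{eqn:char_over_R_3_4}. Combining with $|e-g(k_1,2^{n'})|\le 1$ from the simulation and $|g(k_1,2^{n'})-2^{n'}f(k_1/2^{n'})|\le 1$ from \eqref{eq:etoileepsilon_1} (divided through by $2^{n'}$), one gets
\[
|2^{-n'}e - f(x)| \le 2^{-n'} + 2^{-n'} + 2^{-(n+2)} \le 2^{-(n+1)} + 2^{-(n+2)} \le 2^{-n}.
\]
The step I would be most careful about, and the only real subtlety, is the interplay between reading off $k$ and then committing to the Lipschitz constant $K_{C_{k+1}}$ before deciding how much of $\varphi_x$ to consult: one must check that the enlarged ball $C_{k+1}$ (rather than $C_k$) really contains the perturbed point $d$, and that computing $k$ and then $K_{C_{k+1}}$ costs only polynomially much in $k$ --- both of which hold since $2^k+2^{-n'}\le 2^{k+1}$ for $k\ge1$ and since $\{K_{C_i}\}$ is polytime computable. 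Everything else is a transcription of the globally-Lipschitz argument.
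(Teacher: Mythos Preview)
Your proposal is correct and matches the paper's approach: the paper too defers $(1)\Rightarrow(2)$ entirely to Proposition~\ref{prop:maversion_comp}, and for $(2)\Rightarrow(1)$ modifies $M^{^{\varphi_x}}(n)$ only by first determining the extension parameter, then the corresponding local Lipschitz constant $K_{C_j}$ and $a$ with $K_{C_j}\le 2^a$, before proceeding exactly as in the globally-Lipschitz case. One slip to fix: in the simulation you should feed $N$'s first oracle the \emph{integer} $k_1$, not the dyadic $k_1/2^{n'}$---your own correctness chain already uses $g(k_1,2^{n'})$, so this is just a typo in the description.
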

\begin{proof}
\noindent$(1)\Rightarrow(2):$ Exactly as the proof of $(1)\Rightarrow(2)$ of Proposition \ref{prop:maversion_comp}.\\
\noindent$(2)\Rightarrow(1):$ Similar to the proof of $(2)\Rightarrow(1)$ of Proposition \ref{prop:maversion_comp}.
Only the operation of $M^{^{\varphi_x}}(n)$ needs to be modified as follows.

\begin{enumerate}
\item{Let $j$ be such that $4x\in[-2^j,2^j]$,}
\item{Compute the Lipschitz constant $K_{C_j}$ of the function $f$ (by assumption on $f$ this can be done in
polynomial time),}
\item{Let $a\in\N$ be such that $K_{C_j}\le 2^a$,}
\item{Let $n'=n+2+a$ and let $d=\varphi_x(n')$,}
\item{Then $|d-x|\le2^{-n'}$, hence, it can be assumed without loss of generality that,
$d=\frac{k_1}{2^{n'}}$ for some $k_1\in\Z$,}
\item{Simulate the operation of $N^{^{\varphi_x,\varphi_y}}(0)$: (for $\varphi_y(j)=2^{n'}$)
\begin{enumerate}
\item{whenever $N^{^{()}}$ queries $\varphi_x(i)$, $M^{^{()}}$ returns $k_1$,}
\item{whenever $N^{^{()}}$ queries $\varphi_y(j)$, $M^{^{()}}$ returns $2^{n'}$,}
\end{enumerate}}
\item{Repeat the last step as long as $N^{^{()}}$ keeps querying,}
\item{Let $e$ be the output of $N^{^{()}}$,}
\item{Output $2^{-n'}e$.}
\end{enumerate}

It can be easily verified that $M^{^{\varphi_x}}(n)$ operates in polynomial time and correctly computes $f(x)$
similar to what was done in Proposition \ref{prop:maversion_comp}.
\end{proof}

In order to interrelate with discrete integer complexity classes we employ the following notion of
\emph{approximation}. 

\begin{definition}[Approximation]
\label{dfn:approximation_1}
Let $\mC$ be a class of functions from $\R^2$ to $\R$. Let $\mD$ be a
class of functions from $\Z^2$ to $\Z$. Assume a function $f\colon\R\to\R$.
\begin{enumerate}
\item{We say that $\mC$ \emph{approximates}
$\mD$ if for any function $g \in \mD$, there exists some function $\tilde{g} \in \mC$  such that for all $x,y\in\Z$ we have
\footnote{Notice that the choice of the constants  $\frac{1}{4}$  and $3$ in this definition
is arbitrary.}
\begin{myequation}
|\tilde{g}(x,y)-g(x,y)| \le 1/4
\end{myequation}}
\item{We say that $f$ is \emph{$\mathcal{C}$-definable} if there exists a function $\tilde{g} \in \mC$
such that the following holds
\begin{myequation}
\label{eq:etoile}
\forall x\in \Z,\forall y\in\N^{\ge 1}\colon|\tilde{g}(x,y)-yf(\frac{x}{y})|\le 3
\end{myequation}}
\end{enumerate}
\end{definition}

We then have the following result.

\begin{theorem}\textbf{\emph{(Complexity over $\R$ vs approximate complexity over $\Z^2$)}}
\label{th:lip}
Consider a class $\mC$ of polytime computable real functions that approximate
the class of polytime computable integer functions.
Assume that $f:\R \to \mathbb{R}$ is Lipschitz. Then
$f$ is polytime computable iff $f$ is $\mathcal{C}$-definable.
\end{theorem}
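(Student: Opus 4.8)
The plan is to obtain the theorem essentially for free by feeding Proposition~\ref{prop:maversion_comp} (with the constant $\epsilon = 3$, using that $f$ is Lipschitz, which is exactly what that proposition requires) into the approximation hypothesis on $\mC$; the only genuine construction is in the ``polytime $\Rightarrow$ $\mC$-definable'' direction.

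\textbf{$\mC$-definable $\Rightarrow$ polytime.} Suppose $f$ is $\mC$-definable, witnessed by $\tilde g \in \mC$. By hypothesis every member of $\mC$ is a polytime computable real function, so $\tilde g \colon \R \times \R \to \R$ is polytime computable, and Definition~\ref{dfn:approximation_1}(2) states exactly that $|\tilde g(x,y) - y f(\frac{x}{y})| \le 3$ for all $x \in \Z$, $y \in \N^{\ge 1}$. This is precisely statement~(2) of Proposition~\ref{prop:maversion_comp} with $\epsilon = 3$, so that proposition yields that $f$ is polytime computable. (The $(2)\Rightarrow(1)$ argument there was written out for $\epsilon = 1$, but it is insensitive to the value of the fixed constant, only shifting the auxiliary offset $a$.)

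\textbf{Polytime $\Rightarrow$ $\mC$-definable.} Suppose $f$ is polytime computable. First I would define an integer function $h \colon \Z^2 \to \Z$ meant to track $(x,y) \mapsto y f(\frac{x}{y})$: on input $(x,y)$ with $y \ge 1$, build from $x$ and $y$ a Cauchy function for the rational $\frac{x}{y}$ (its dyadic truncations $\lfloor 2^m x/y \rfloor / 2^m$ are computable in time polynomial in $m + \len{x} + \len{y}$), run the oracle machine for $f$ on it with precision parameter $m = \len{y} + 2$ to get $a$ with $|a - f(\frac{x}{y})| \le 2^{-m}$, and let $h(x,y)$ be the integer nearest to $y\,a$ (setting $h(x,y)=0$ when $y\le 0$, which is irrelevant since the definition quantifies only over $y\in\N^{\ge 1}$). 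Since $y \le 2^{\len{y}}$, we get $|y\,a - y f(\frac{x}{y})| \le 2^{\len{y}} 2^{-m} = 1/4$, so rounding gives $|h(x,y) - y f(\frac{x}{y})| \le 3/4$ for all $x \in \Z$, $y \in \N^{\ge 1}$, and $h$ runs in time polynomial in $\len{x} + \len{y}$, hence is a polytime integer function. Because $\mC$ approximates the class of polytime integer functions, there is $\tilde g \in \mC$ with $|\tilde g(x,y) - h(x,y)| \le 1/4$ for all $x,y \in \Z$; combining the two estimates, $|\tilde g(x,y) - y f(\frac{x}{y})| \le 1/4 + 3/4 = 1 \le 3$ for every $x \in \Z$, $y \in \N^{\ge 1}$, so $f$ is $\mC$-definable.

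I do not anticipate a serious obstacle. The one point that must be handled with care is the precision blow-up: multiplying by $y$ loses $\len{y}$ bits of accuracy, so $h$ must compute $f$ to precision about $\len{y}$, and one has to check this still keeps $h$ inside the polytime integer functions — it does, since $\len{y}$ is part of the input length — so that the approximation hypothesis on $\mC$ can legitimately be invoked.
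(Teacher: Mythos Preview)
Your proof is correct and follows essentially the same approach as the paper: the backward direction is identical (invoke Proposition~\ref{prop:maversion_comp} with $\epsilon=3$), and in the forward direction both arguments produce a polytime integer function $h$ approximating $(x,y)\mapsto yf(\frac{x}{y})$, then apply the approximation hypothesis on~$\mC$ and a triangle inequality. The only difference is cosmetic: the paper first cites Proposition~\ref{prop:maversion_comp} to get the real-valued $g(x,y)=yf(\frac{x}{y})$ and then floors it to obtain $h$, whereas you build $h$ directly from the oracle machine for $f$---this is slightly more streamlined and even yields the sharper bound $|\tilde g(x,y)-yf(\frac{x}{y})|\le 1$, but the underlying idea is the same.
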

\begin{proof}
Assume that $f$ is polytime computable. By Proposition \ref{prop:maversion_comp},
there exists a polytime computable function $g$ such that \eqref{eq:etoileepsilon} holds with $\epsilon=\frac{3}{4}$

\begin{myequation}
\label{eqn:th_1_1}
\forall x\in \Z,\forall y\in\N^{\ge 1}\colon|g(x,y)-yf(\frac{x}{y})|\le \frac{3}{4}
\end{myequation}

Since $g$ is polytime computable, there exists an oracle machine $M^{^{()}}$ which
efficiently computes $g$. Consider a function $h:\Z^2\to\Z$ where $h(x,y)$ is defined as follows:
(i) simulate the computation of $M^{^{\varphi_x,\varphi_y}}(1)$ where the exact values of $x$ and $y$
are used to answer the oracle queries, (ii) let $e$ be the output of that simulation, and (iii)
return $\lfloor e\rfloor$ as the value of $h(x,y)$.
By the definition of $h$ we have $|h(x,y)-\lfloor g(x,y)\rfloor|\le 1$.
In addition $h$ is polytime computable, hence from the theorem hypothesis there exists some $\tilde{g} \in \mC$ such that
$$\forall x,y\in\Z:|\tilde{g}(x,y) - h(x,y)| \le 1/4$$

Hence,
\begin{equation}
\forall x,y\in \Z\colon|\tilde{g}(x,y)-\lfloor g(x,y)\rfloor|\le 1+\frac14 = \frac{5}{4}
\end{equation}

We have $|g(x,y)-\lfloor g(x,y) \rfloor|\le 1$, then

\begin{myequation}
\label{eqn:th_1_2}
\forall x,y\in\Z:|\tilde{g}(x,y)-g(x,y)|\le\frac{9}{4}
\end{myequation}

Finally, from Eq. \eqref{eqn:th_1_1} and Eq. \eqref{eqn:th_1_2} we have the desired result

\begin{equation}
\forall x\in \Z,\forall y\in\N^{\ge 1}\colon|\tilde{g}(x,y)-yf(\frac{x}{y})|\le \frac{9}{4}+\frac{3}{4}=3
\end{equation}

The other direction follows from Proposition \ref{prop:maversion_comp} with $\epsilon=3$,
observing that the functions in $\mC$ are polytime computable.
\end{proof}

In the right-to-left direction of the previous theorem,
Eq. \eqref{eq:etoile} implicitly provides a way to efficiently approximate $f$
from $\tilde{g}\upharpoonright\Z^2$.
Computability of $f$ is possible,
in particular at the limit points, from the fact that it is
Lipschitz (hence continuous), and efficiency
is possible by the fact that $\tilde{g}$ is polytime computable.
The left-to-right direction
relates polytime computability of real functions to the corresponding classical discrete notion.

Using Proposition \ref{prop:maversion_comp_2} the previous theorem can be generalized to locally poly-Lipschitz functions as follows.

\begin{theorem}\textbf{\emph{(Complexity over $\R$ vs approximate complexity over $\Z^2$)}}
\label{th:lip_2}
Consider a class $\mC$ of polytime computable real functions that approximate
the class of polytime computable integer functions.
Assume that $f:\R \to \mathbb{R}$ is locally poly-Lipschitz. Then
$f$ is polytime computable iff $f$ is $\mathcal{C}$-definable.
\end{theorem}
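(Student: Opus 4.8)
The plan is to mimic the proof of Theorem~\ref{th:lip} essentially verbatim, replacing every appeal to Proposition~\ref{prop:maversion_comp} by the corresponding appeal to Proposition~\ref{prop:maversion_comp_2}. The key observation making this legitimate is that the two propositions have exactly the same logical shape --- a biconditional between polytime computability of $f$ and the existence of a polytime computable $g$ with $|g(x,y)-yf(x/y)|\le\epsilon$ for $x\in\Z$, $y\in\N^{\ge1}$, with $\epsilon$ a free parameter --- the only difference being that Proposition~\ref{prop:maversion_comp_2} requires $f$ to be locally poly-Lipschitz rather than Lipschitz, and that is precisely the hypothesis of Theorem~\ref{th:lip_2}.

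For the left-to-right direction, I would start from a polytime computable $f$ and invoke Proposition~\ref{prop:maversion_comp_2} with $\epsilon=\tfrac34$ to obtain a polytime computable $g\colon\R\times\R\to\R$ with $|g(x,y)-yf(x/y)|\le\tfrac34$ for all $x\in\Z$, $y\in\N^{\ge1}$. Then I would discretize $g$: if $M^{^{()}}$ is an oracle machine computing $g$ in polynomial time, define $h\colon\Z^2\to\Z$ by simulating $M^{^{\varphi_x,\varphi_y}}(1)$ with the exact integers $x,y$ answering all oracle queries and returning the floor of the output; this makes $h$ a polytime computable integer function with $|h(x,y)-\lfloor g(x,y)\rfloor|\le1$. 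The hypothesis that $\mC$ approximates the polytime computable integer functions then yields $\tilde g\in\mC$ with $|\tilde g(x,y)-h(x,y)|\le\tfrac14$ on $\Z^2$, and chaining these estimates with $|g(x,y)-\lfloor g(x,y)\rfloor|\le1$ gives $|\tilde g(x,y)-yf(x/y)|\le\tfrac94+\tfrac34=3$, i.e.\ $f$ is $\mC$-definable. For the right-to-left direction I would take $\tilde g\in\mC$ witnessing $\mC$-definability, so $|\tilde g(x,y)-yf(x/y)|\le3$ for $x\in\Z$, $y\in\N^{\ge1}$; since every function in $\mC$ is polytime computable, $\tilde g$ is a polytime computable real function, and Proposition~\ref{prop:maversion_comp_2} applied with $\epsilon=3$ immediately gives that $f$ is polytime computable.

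I do not expect any genuine obstacle: the discretization argument and the arithmetic bookkeeping of the error constants are identical to those in Theorem~\ref{th:lip}, and the locally-poly-Lipschitz hypothesis is consumed entirely inside Proposition~\ref{prop:maversion_comp_2} (it is what lets the reconstruction machine there compute, in polynomial time, a Lipschitz constant $K_{C_j}$ valid on the relevant ball before fixing the working precision $n'$). The only points worth double-checking are that $\epsilon$ is genuinely a free parameter in Proposition~\ref{prop:maversion_comp_2}, so that both $\epsilon=\tfrac34$ and $\epsilon=3$ are admissible, and that the discretized machine $h$ indeed runs in polynomial time in $|x|+|y|$ --- both of which follow directly from how the earlier results are stated.
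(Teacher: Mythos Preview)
Your proposal is correct and matches the paper's approach exactly: the paper does not even spell out a separate proof for Theorem~\ref{th:lip_2}, but simply remarks that using Proposition~\ref{prop:maversion_comp_2} the argument of Theorem~\ref{th:lip} generalizes to locally poly-Lipschitz functions. Your write-up is precisely that generalization carried out in detail.
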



\subsection{Avoiding the Lipschitz hypothesis}

\newcommand\mysharp[1]{\#_#1}
\newcommand\mysharpa[2]{\#_#1[#2]}

The major obstacle to avoiding the Lipschitz hypothesis is how to implicitly encode the continuity of $f$ in discrete
computations. This is done in two steps:
(1) encoding the modulus of continuity
which provides information at arbitrarily small rational intervals (however,
it does not tell anything about the irrational limit points)
and (2) bounding the
behavior of the characterizing function $g$ both at compact subintervals of its domain and at the integers.

We need another notion of `approximation' that is a kind of converse to that given in Definition
\ref{dfn:approximation_1}.

\begin{definition}[Polytime computable integer approximation]
A function $g: \R^d \to \R$ is said to have a polytime computable integer approximation
if there exists some polytime computable integer function $h: \Z^d \to \Z$
such that
\begin{myequation}
\forall \bar{x} \in \Z^d\colon |h(\bar{x})- g(\bar{x})|\le 1
\end{myequation}
\end{definition}

A sufficient condition is that the restriction of $g$ to the integers ($g\upharpoonright\Z^2$) is polytime computable.
The choice of the constant $1$ is then due to the fact that this is the best estimated error when trying
to compute the floor of $g(\bar{x})$.
Now we define a special class of functions that will be used to implicitly
describe information about the smoothness of real functions. Their role can be compared to that of the
modulus of continuity.

\begin{definition}[$\mysharp{T}$]
Consider a function $T\colon\N\to \N$ and define
$\mysharp{T}:\R^{\ge 1} \to \R$ by
$\mysharpa{T}{x}=2^{T(\lfloor\log_2 x\rfloor)}$.
When $T$ is a polynomial function with
degree $k$ we write $\mysharp{k}$ to simplify the notation, and we let $\mysharp{k}[x]=2^{\lfloor\log_2 x\rfloor^k}$.
\end{definition}

For $x\in\R$ let $\len{x}$ denote the length of the binary representation of $\lfloor x\rfloor$.
Then the following proposition is the non-Lipschitz version
of Proposition \ref{prop:maversion_comp}.

\begin{proposition}\textbf{\emph{(Complexity over $\R$ vs complexity over $\R^3$)}}
\label{prop:maversiond}
Fix an arbitrary constant $\epsilon\ge0$. Then the following are equivalent:
\begin{enumerate}
\item{a function $f:\R \to \mathbb{R}$ is polytime computable,}
\item{
there exists some function $g\colon\R^3\to\R$ such that
\begin{enumerate}
\item{\label{itemaa} $g$ has a polytime computable integer approximation,}
\item{\label{itema} for some integer $k$,
\begin{myequation}
\label{eq:etoileepsilond}
\forall x\in\R\;\forall y,z\in\R^{\ge1},& z> 4|x| \colon\\
& |g(x\mysharpa{k}{yz},y,z)-yf(x)|\le \epsilon
\end{myequation}}
\item{\label{itemc} \label{itemb} for some constant $M$,
\begin{myequation}
\label{eqn:smoothness_of_g}
\forall x_1,x_2&\in\R\;\forall y,z\in\R^{\ge1}, z>\frac{4|x_1|}{\mysharpa{k}{yz}}\colon \\
& |x_1-x_2|\le 1 \Rightarrow |g(x_1,y,z)-g(x_2,y,z)|\le M
\end{myequation}}
\end{enumerate}}
\end{enumerate}
\end{proposition}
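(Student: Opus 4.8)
The proposition is the non-Lipschitz analogue of Proposition~\ref{prop:maversion_comp}, and the plan is to mirror that proof while using the new gadget $\mysharp{k}$ to carry the modulus of continuity and condition~\eqref{itemb} to carry the uniform-boundedness control that the Lipschitz constant used to provide for free. For the direction $(1)\Rightarrow(2)$ I would, as before, simply exhibit a witness: take $g(x,y,z)=yf(x)$. Since $f$ is polytime computable and multiplication is polytime, $g\upharpoonright\Z^3$ is polytime computable, hence (a) holds; and \eqref{eq:etoileepsilond} is immediate for any $k$ since $g(x\mysharpa{k}{yz},y,z)$ is not what we want --- so actually I must be more careful: I want $g$ so that $g(x\mysharpa{k}{yz},y,z)\approx yf(x)$, so I should set $g(u,y,z)=y\,f\!\bigl(u/\mysharpa{k}{yz}\bigr)$; then $g(x\mysharpa{k}{yz},y,z)=yf(x)$ exactly, giving \eqref{eq:etoileepsilond} with error $0\le\epsilon$. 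For (c), note $|g(x_1,y,z)-g(x_2,y,z)|=y\,|f(x_1/\mysharpa{k}{yz})-f(x_2/\mysharpa{k}{yz})|$, and on the relevant range $x_1/\mysharpa{k}{yz}$ is bounded, $|x_1-x_2|\le 1$ forces the arguments of $f$ to be within $1/\mysharpa{k}{yz}\le 1$ of each other inside a fixed compact set, so by uniform continuity of $f$ on that compact set the difference is bounded; taking $M$ to be $y$ times that bound --- wait, $y$ is unbounded, so this needs the modulus: choosing $k$ large enough that $\mysharpa{k}{yz}$ beats the polynomial modulus of $f$ makes $y\cdot 2^{-(\text{modulus})}$ bounded. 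This is exactly where $k$ gets pinned down, using Proposition~\ref{thm:characterizing_ptime_over_real_functions_case} to get a polynomial modulus $m(k',n)=(k'+n)^a$ for $f$.

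**The substantive direction, $(2)\Rightarrow(1)$.** Assume (a),(b),(c). I would build an oracle machine $M^{\varphi_x}(n)$ along the lines of the Lipschitz proof: given $x$ and target precision $n$, first read $d=\varphi_x(n')$ for a suitable $n'$ (to be chosen polynomially in $n$ and in $k=\len{x}$), writing $d=k_1/2^{n'}$ with $k_1\in\Z$; choose $y=2^{n'}$ (so that $y f(x)\approx yd$-scaled value recovers $f$ at precision $\sim n'$) and choose $z$ a power of two with $z>4|x|$, e.g. $z=2^{k+2}$. The point of the $\mysharp{k}$ gadget is that $\mysharpa{k}{yz}=2^{\lfloor\log_2(yz)\rfloor^k}$ is an integer power of two depending only on $\lfloor\log_2(yz)\rfloor$, which is polynomial in $n,k$; so $x\mysharpa{k}{yz}$ evaluated at the dyadic $d$ is an integer, and I can feed the integer triple $(k_1\cdot\mysharpa{k}{yz}/2^{n'},\,2^{n'},\,2^{k+2})$ --- arranging $n'$ so this first coordinate is an integer --- to the polytime integer approximation $h$ of $g$. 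Then $h$ of that triple is within $1$ of $g(\cdot)$, which by \eqref{eq:etoileepsilond} is within $\epsilon$ of $y f(d)=2^{n'}f(d)$, and by the modulus of continuity $f(d)$ is within $2^{-n}$-ish of $f(x)$. Dividing the output by $2^{n'}$ and a short error-accounting argument (exactly parallel to Eqs.~\eqref{eqn:char_over_R_3_1}--\eqref{eqn:char_over_R_3_4} in the Lipschitz proof, but with the Lipschitz estimate \eqref{eqn:char_over_R_3_4} replaced by the modulus-of-continuity estimate) yields $|M^{\varphi_x}(n)-f(x)|\le 2^{-n}$. Polynomial running time follows because $h$ is polytime, all the integer arguments have bit-length polynomial in $n+k$, and the arithmetic (including computing $\mysharpa{k}{yz}$, a power of two with polynomially-bounded exponent) is polytime.

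**Where condition~(c) is used, and the main obstacle.** The delicate point --- and the reason \eqref{eqn:smoothness_of_g} is in the hypothesis --- is that feeding $h$ the \emph{exact} integer built from $d=\varphi_x(n')$ rather than the true $x$ introduces an error: we compute $g$ at the integer nearest to $x\mysharpa{k}{yz}$ rather than at $x\mysharpa{k}{yz}$ itself, and these differ by at most $1$ in the first argument (after choosing $n'$ so that $\mysharpa{k}{yz}/2^{n'}\le 1$, i.e. $n'\ge \lfloor\log_2(yz)\rfloor^k$, which forces $n'$ --- hence $y=2^{n'}$ --- to be genuinely large, roughly $(n+k)^{ak}$-sized). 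Condition~\eqref{eqn:smoothness_of_g} says precisely that moving the first argument by at most $1$ changes $g$ by at most the \emph{constant} $M$, on exactly the range $z>4|x_1|/\mysharpa{k}{yz}$ that our choice of $z$ puts us in. So the output error from this rounding is $M\cdot 2^{-n'}$, which is negligible once $n'$ is chosen a constant larger than needed. The main obstacle, then, is the bookkeeping: one must choose the three interlocking parameters $k$ (degree of the $\mysharp{}$, pinned by the polynomial modulus), $z=2^{k+2}$ (to land in the range of (b) and (c)), and $n'$ (large enough that $\mysharpa{k}{yz}$ divides $2^{n'}$, that the modulus estimate gives $2^{-n}$, and that $M\cdot2^{-n'}$ and the $\epsilon+1$ slack all fit), and verify that each such choice is still polynomially bounded in $n$ and $k=\len{x}$ --- the exponent blow-up from the $\lfloor\log_2(yz)\rfloor^k$ in the exponent of a power of two is fine (it is polynomial in the \emph{logarithm} $n'$, hence the numbers stay polynomial-bit-length), but this is exactly the subtlety that has to be checked rather than waved through.
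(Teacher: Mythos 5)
Your $(1)\Rightarrow(2)$ direction is essentially the paper's: after the false start with $g(x,y,z)=yf(x)$ you land on $g(u,y,z)=y\,f(u/\mysharpa{k}{yz})$, use the polynomial modulus of $f$ to prove \eqref{itemc} with a constant $M$, and observe that the integer restriction gives \eqref{itemaa}. (The paper additionally completes $g$ piecewise on $y<1$ or $z<1$ so that the integer approximation is total, and constructs $h$ explicitly, but you have the main idea.)

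In $(2)\Rightarrow(1)$, however, your concrete parameter choice is self-contradictory. You read $d=\varphi_x(n')$ and then set $y=2^{n'}$, feed $(k_1\mysharpa{k}{yz}/2^{n'},\,2^{n'},\,2^{k+2})$ to $h$, and divide the output by $2^{n'}$. You correctly identify that for the rounding of the first argument to cost at most $1$ (so that \eqref{itemc} applies) you need $\mysharpa{k}{yz}/2^{n'}\le 1$, i.e.\ $n'\ge\lfloor\log_2(yz)\rfloor^k$. But with your $y=2^{n'}$ and $z=2^{k+2}$ this reads $n'\ge(n'+k+2)^k$, which has no solution for $k\ge 1$; the requirement chases its own tail. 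The line ``which forces $n'$ ... to be genuinely large, roughly $(n+k)^{ak}$-sized'' papers over an inconsistency rather than resolving a bookkeeping subtlety. The fix, which is exactly what the paper does, is to \emph{decouple} the two exponents: take $y=2^n$ (the target precision, polynomial in $n$ only) and $z=2^{\len{x}+b}$; then $\mysharpa{k}{yz}=2^{(n+\len{x}+b)^k}$, and the oracle read depth $n'=(n+\len{x}+b)^k+1$ is a separate, much larger quantity that makes $|d-x|\cdot\mysharpa{k}{yz}\le\frac12$ without any circularity. The final output is divided by $y=2^n$, not by $2^{n'}$, and the error from \eqref{itemaa}, \eqref{itema}, \eqref{itemc} is $O((M+\epsilon)\cdot 2^{-n})$, absorbed by padding $n$ by a constant. (A second, smaller omission: the machine cannot compute $z=2^{\len{x}+2}$ exactly since it only sees $x$ through the oracle; the paper bootstraps $\len{d}$ from $\varphi(2)$ and then handles the two cases $\len{x}=\len{d}$ and $\len{d}=\len{x}+1$ by taking $b\in\{4,5\}$, which you do not address.)
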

\begin{proof}
$(2)\Rightarrow(1):$
For simplicity, assume $\epsilon=1$. Assume there exists a function $g$ that satisfies the
given conditions.  Assume some $x\in\R$ and $n\in\N$.
Let $y=2^n$ and $z=2^{\len{x}+b}$ for some
arbitrary fixed constant $b\ge4$. Then $y,z\ge 1$ and $z>4|x|$.
From condition \eqref{itema} we have

\begin{myequation}
\label{eqn:char_of_compact_1a}
&|g(2^{(\len{x}+n+b)^k}x,y,z)-yf(x)|\le 1\\
&|y^{-1}g(2^{(\len{x}+n+b)^k}x,y,z)-f(x)|\le y^{-1}
\end{myequation}

Let $h$ be a polytime computable integer function with
$$|h(x,y,z)- g(x,y,z) |\le 1$$ for all $x,y,z \in \Z$.
Such a function exists by condition \eqref{itemaa}. Hence,
\begin{equation}
\label{eqn:char_of_compact_2a}
|g(\lfloor 2^{(\len{x}+n+b)^k}x\rfloor,y,z)-h(\lfloor 2^{(\len{x}+n+b)^k}x\rfloor,y,z)|\le 1
\end{equation}

Note that $\frac{2^{(\len{x}+n+b)^k} 4|x|}{\mysharpa{k}{yz}}
=\frac{2^{(\len{x}+n+b)^k}4 |x|}{2^{\lfloor\log {yz}\rfloor^k}}
=\frac{2^{(\len{x}+n+b)^k}4 |x|}{2^{\lfloor\log {2^{n+\len{x}+b}}\rfloor^k}}=4|x|<z$,
hence condition \eqref{itemc} can be applied to get

\begin{equation}
\label{eqn:char_of_compact_3a}
|g(\lfloor 2^{(\len{x}+n+b)^k} x\rfloor,y,z)-g(2^{(\len{x}+n+b)^k} x,y,z)|\le M
\end{equation}

From Equation \eqref{eqn:char_of_compact_2a} and Equation \eqref{eqn:char_of_compact_3a} we have

\begin{myequation}
\label{eqn:char_of_compact_4a}
&|g( 2^{(\len{x}+n+b)^k} x,y,z)-h(\lfloor 2^{(\len{x}+n+b)^k}x\rfloor,y,z)|\le M+1\\
&|y^{-1}g( 2^{(\len{x}+n+b)^k} x,y,z)-y^{-1}h(\lfloor 2^{(\len{x}+n+b)^k}x\rfloor,y,z)|\le (M+1)y^{-1}
\end{myequation}

From Equation \eqref{eqn:char_of_compact_1a} and Equation \eqref{eqn:char_of_compact_4a}

\begin{myequation}
\label{eqn:char_of_compact_5}
&|f(x)-y^{-1}h(\lfloor 2^{(\len{x}+n+b)^k}x\rfloor,y,z)|\le(M+2)y^{-1}\\
&|f(x)-2^{-n}h(\lfloor 2^{(\len{x}+n+b)^k}x\rfloor,2^n,2^{\len{x}+b})|\le(M+2)2^{-n}
\end{myequation}

Using this last equation we can build a polytime oracle Turing machine that computes $f(x)$ as
follows. Assume some $\varphi\in CF_x$. Consider a
machine $M^{^{\varphi}}(n)$ that does the following:

\begin{enumerate}
\item{let $d'=\varphi(2)$,}
\item{let $\len{d}$ denote the length of the binary representation of $\lfloor d'+1\rfloor$,}
\item{let $d=\varphi((\len{d}+n+4)^k+1)$,}
\item{let $w=h(\lfloor 2^{(\len{d}+n+4)^k}d\rfloor,2^n,2^{\len{d}+4})$,}
\item{output $2^{-n}w$.}
\end{enumerate}

It is clear that the computation time of $M^{^\varphi}(n)$
is bounded by a polynomial in terms of $n$ and $\len{x}$.
So it remains to show the correctness of $M^{^\varphi}(n)$.
There are two cases. Assume first that $\len{x}=\len{d}$. In such a case let $b=4$ in the above equations.
By definition of Cauchy sequences we have:

\begin{myequation}
\label{eqn:main_1}
&|d-x|\le 2^{-((\len{d}+n+4)^k+1)}\\
&|2^{(\len{d}+n+4)^k}d-2^{(\len{d}+n+4)^k}x|\le1/2\\
&|\lfloor2^{(\len{d}+n+4)^k}d\rfloor-\lfloor2^{(\len{d}+n+4)^k}x\rfloor|\le1
\end{myequation}

From Equation \eqref{eqn:char_of_compact_2a}, and the fact that $\len{x}=\len{d}$ and $b=4$, we have

\begin{myequation}
\label{eqn:main_2}
&|g(\lfloor 2^{(\len{d}+n+4)^k}x\rfloor,2^n,2^{\len{d}+4})-h(\lfloor 2^{(\len{d}+n+4)^k}x\rfloor,2^n,2^{\len{d}+4})|\le 1\\
&|g(\lfloor 2^{(\len{d}+n+4)^k}d\rfloor,2^n,2^{\len{d}+4})-h(\lfloor 2^{(\len{d}+n+4)^k}d\rfloor,2^n,2^{\len{d}+4})|\le 1
\end{myequation}

Now we want to apply condition \eqref{itemc} with

\begin{align*}
x_1 = \lfloor2^{(\len{d}+n+4)^k}d\rfloor,\qquad & x_2 = \lfloor2^{(\len{d}+n+4)^k}x\rfloor\\
y = 2^n,\qquad & z = 2^{\len{d}+4}\qquad
\end{align*}

From Eq. \eqref{eqn:main_1} we have $|x_1-x_2|\le1$. And

\begin{myequation}
\frac{4|x_1|}{\mysharpa{k}{yz}} &= \frac{4|\lfloor2^{(\len{d}+n+4)^k}d\rfloor|}{2^{{\lfloor\log_2yz\rfloor}^k}}
= \frac{4|\lfloor2^{(\len{d}+n+4)^k}d\rfloor|}{2^{{\lfloor\log_2 (2^n 2^{\len{d}+4})\rfloor}^k}}\\
& = \frac{4|\lfloor2^{(\len{d}+n+4)^k}d\rfloor|}{2^{{(n+\len{d}+4)}^k}}
\le \frac{4\lfloor2^{(\len{d}+n+4)^k}2^{\len{d}}\rfloor}{2^{{(n+\len{d}+4)}^k}}\\
& = \frac{4\cdot 2^{(\len{d}+n+4)^k}2^{\len{d}}}{2^{{(n+\len{d}+4)}^k}}
= 4\cdot 2^{\len{d}} < 2^{\len{d}+4} = z
\end{myequation}

Accordingly

\begin{equation}
\label{eqn:main_4}
|g(\lfloor2^{(\len{d}+n+4)^k}d\rfloor,2^n,2^{\len{d}+4})-g(\lfloor2^{(\len{d}+n+4)^k}x\rfloor,2^n,2^{\len{d}+4})|\le M
\end{equation}

From Equation \eqref{eqn:main_2} and Equation \eqref{eqn:main_4} we have:

\begin{myequation}
\label{eqn:main_5}
|h(\lfloor 2^{(\len{d}+n+4)^k}d\rfloor,2^n,2^{\len{d}+4})-g(\lfloor2^{(\len{d}+n+4)^k}x\rfloor,2^n,2^{\len{d}+4})|\le M+1
\end{myequation}

From Equation \eqref{eqn:main_2} and Equation \eqref{eqn:main_5} we have

\begin{myequation}
\label{eqn:main_6}
&|h(\lfloor 2^{(\len{d}+n+4)^k}x\rfloor,2^n,2^{\len{d}+4})-h(\lfloor 2^{(\len{d}+n+4)^k}d\rfloor,2^n,2^{\len{d}+4})|\le M+2\\
&|2^{-n}h(\lfloor 2^{(\len{d}+n+4)^k}x\rfloor,2^n,2^{\len{d}+4})-\\
&\qquad\qquad\qquad 2^{-n}h(\lfloor 2^{(\len{d}+n+4)^k}d\rfloor,2^n,2^{\len{d}+4})|\le (M+2)2^{-n}\\
\end{myequation}

From Equation \eqref{eqn:char_of_compact_5} and Equation \eqref{eqn:main_6} and the fact that $b=4$ and $\len{x}=\len{d}$:

\begin{equation}
\label{eqn:avoid_lip_2}
|f(x)-2^{-n}h(\lfloor 2^{(\len{d}+n+4)^k}d\rfloor,2^n,2^{\len{d}+4})|\le (M+2)2^{-(n-1)}
\end{equation}

Note that the above algorithm outputs $2^{-n}h(\lfloor 2^{(\len{d}+n+4)^k}d\rfloor,2^n,2^{\len{d}+4})$. Hence,
the algorithm is correct for the case $\len{x}=\len{d}$.
The only other possibility about the relationship between $\len{x}$ and $\len{d}$ is that
$\len{d}=\len{x}+1$. Such a case is equivalent to letting $b=5$ and following the same line of reasoning as above.
Hence, $M^{^\varphi}(n)$ correctly approximates $f(x)$.\\

\noindent$(1)\Rightarrow(2):$ Assume that $f\colon\R\to\R$ is polytime computable.
Hence $f$ has a polynomial modulus
$m(n',n)=(n'+n)^{k-1}$ for some constant $k\in\N^{\ge2}$, where $n'$ and $n$ are the extension and precision parameters
respectively.
Define $g\colon\R^3\to\R$ as follows.


\begin{myequation}
g(x,y,z)=
\begin{cases}
y f(\frac{x}{\mysharpa{k}{yz}}) & y,z\ge 1\\
y f(\frac{x}{\mysharpa{k}{y}}) & y\ge1,z<1\\
y f(\frac{x}{\mysharpa{k}{z}}) & y<1,z\ge1\\
y f(x) & y,z<1\\
\end{cases}
\end{myequation}

Then for every $x\in\R$, every $y,z\in \R^{\ge 1}$, and every $z > 4|x|$ we have

\begin{align*}
|g(x \mysharpa{k}{yz},y,z)-yf(x)|=|yf(\frac{x \mysharpa{k}{yz}}{\mysharpa{k}{yz}})-yf(x)|=0
\end{align*}

Hence, condition \eqref{itema} is satisfied.
Now assume $x_1,x_2 \in \R$ and $y,z\in\R^{\ge1}$ such that $|x_1-x_2|\le1$ and $z>\frac{4|x_1|}{\mysharpa{k}{yz}}$.
Then

\begin{myequation}
\label{eqn:avoid_lip_1}
|g(x_1,y,z)-g(x_2,y,z)|&=|yf(\frac{x_1}{\mysharpa{k}{yz}})-yf(\frac{x_2}{\mysharpa{k}{yz}})|\\
& = y|f(\frac{x_1}{\mysharpa{k}{yz}})-f(\frac{x_2}{\mysharpa{k}{yz}})|
\end{myequation}

We have

\begin{myequation}
|\frac{x_1}{\mysharpa{k}{yz}}-\frac{x_2}{\mysharpa{k}{yz}}|&=\frac{1}{\mysharpa{k}{yz}}|x_1-x_2|\\
&\le\frac{1}{\mysharpa{k}{yz}}=2^{-\lfloor\log{z}+\log{y}\rfloor^k}
\end{myequation}

Since $z>\frac{4|x_1|}{\mysharpa{k}{yz}}$, then $\log{z}\ge len(\lfloor \frac{|x_1|}{\mysharpa{k}{yz}}\rfloor)$
and $\log{z}\ge len(\lfloor \frac{|x_2|}{\mysharpa{k}{yz}}\rfloor)$. Hence, $\log{z}$ is an upper bound to the extension parameter of the
input to $f$ as given by Equation \eqref{eqn:avoid_lip_1}. In addition $\log{y}$ represents the precision parameter.
Hence, by applying the modulus of continuity of $f$, $m(n',n)=(n'+n)^{k-1}$, we have

\begin{myequation}
|g(x_1,y,z)-g(x_2,y,z)|\le|y2^{-\log{y}}|=1
\end{myequation}

Hence, condition \eqref{itemb} is satisfied with $M=1$.
Assume $i_1,i_2,i_3\in\Z$.
The following procedure computes a function $h\colon\Z^3\to\Z$ such that $|h(i_1,i_2,i_3)- g(i_1,i_2,i_3)|\le 1$:
\begin{enumerate}
\item{if $i_2,i_3\ge 1$, then let $j=i_2 i_3$,}
\item{if $i_2\ge1$ and $i_3 < 1$, then let $j=i_2$,}
\item{if $i_2<1$ and $i_3 \ge 1$, then let $j=i_3$,}
\item{if $i_2,i_3< 1$, then let $j=1$,}
\item{let $l=len(j)-1$,}
\item{Shift right the binary representation of $i_1$ by $l^k$ positions, the result would be a dyadic rational $d$
(this corresponds to dividing $i_1$ by $2^{l^k}$),}
\item{Simulate the computation of $f(d)$ assuming large enough though fixed precision. When simulating the oracle, $d$ is
presented exactly,}
\item{Multiply the output of the previous step by $i_2$. Finally, truncate the result to extract the integer part.}
\end{enumerate}

It is clear that all of these steps can be performed in polynomial time
in terms of $len(|i_1|)$, $len(|i_2|)$, and $len(|i_3|)$. The fixed
precision in step 7 can be calculated from the modulus of $f$ given the fact that the error in the output
of the above procedure should be at most $1$. Hence,
condition \eqref{itemaa} is satisfied. This completes the proof of the proposition.
\end{proof}

\begin{remark}~
\begin{enumerate}
\item{The previous proposition can be generalized to any function $f\colon\R^n\to\R$ by appropriately adjusting
the arities of the functions $g$ and $h$ (as well as slightly modifying Definition \ref{dfn:approximation_1} to include
functions with any arity).}
\item{Given a unary function $f\colon\R\to\R$ that is polytime computable the previous proposition
essentially characterizes $f$ with an integer function $h$ that is polytime computable
(see Equation \eqref{eqn:avoid_lip_2} and the associated algorithm). $h$ takes three arguments:
one argument that can be interpreted
as the extension parameter (the last argument, $2^{\len{d}+4}$, in Equation \eqref{eqn:avoid_lip_2}),
another that can be interpreted as the precision parameter, (the middle argument, $2^n$, in Equation \eqref{eqn:avoid_lip_2}),
and the third argument is the value itself
(the first argument, $\lfloor 2^{(\len{d}+n+4)^k} d\rfloor$, in Equation \eqref{eqn:avoid_lip_2}).}
\end{enumerate}
\end{remark}

We need to consider real functions that are well
behaved relative to their restriction to the integers.
For ease of notation, we will use $[a,b]$ to denote either the interval $[a,b]$ or
the interval $[b,a]$, according to whether or not $a<b$.

\begin{definition}[Peaceful functions]~
\begin{enumerate}
\item{A function $g: \R^3 \to \R$ is said to
be peaceful if
\begin{equation}
\forall x\in\R,\forall y,z\in
\N^{\ge 1} \colon g(x,y,z) \in [g(\lfloor x \rfloor ,y,z) ,g(\lceil x \rceil ,y,z)]
\end{equation}}
\item{We say that a class $\mC$ of real functions peacefully approximates some
class $\mD$ of integer functions, if the subclass of peaceful functions of $\mC$ approximates $\mD$.}
\end{enumerate}
\end{definition}



\begin{definition}
Let $\mC$ be a class of functions from $\R^3$ to $\R$.
Let us consider a function $f:\R \to \mathbb{R}$ and a function $T\colon\N\to \N$.
\begin{enumerate}
\item{We say that $f$ is $T$-$\mathcal{C}$-definable if there exists some peaceful function $g \in \mathcal{C}$
such that
\begin{myequation}
\forall x\in \Z\;\forall y,z\in \N^{\ge 1}, z>\frac{4|x|}{\mysharpa{T}{yz}}  \colon|g(x,y,z)-yf(\frac{x}{\mysharpa{T}{yz}})|\le 2
\end{myequation}}
\item{We say that $f$ is $T$-smooth if there exists some integer $M$ such that
\begin{myequation}
\label{eqn:for_continuity_of_f}
\forall x_1,x_2&\in\R\;\forall y,z\in\R^{\ge1}, z>\frac{4|x_1|}{\mysharpa{T}{yz}}:\\
& |x_1-x_2|\le1 \Rightarrow y |f(\frac{x_1}{\mysharpa{T}{yz}})-f(\frac{x_2}{\mysharpa{T}{yz}})|\le M
\end{myequation}}
\end{enumerate}
\end{definition}

Notice the similarity in the role that $\mysharpa{T}{yz}$ plays in the previous definition
and the role of the modulus of continuity of $f$.
Now we can have the non-Lipschitz version of Theorem \ref{th:lip}.

\begin{theorem}\textbf{\emph{(Complexity over $\R$ vs approximate complexity over $\Z^3$)}}
\label{th:two}
Consider a class $\mC$ of real functions that peacefully approximates
polytime computable integer functions.
And whose functions have polytime computable integer approximations
\footnote{A sufficient condition for that is restrictions to integers of functions from $\mC$ are
polytime computable.}. Then the following are equivalent.
\begin{enumerate}
\item{a function $f\colon\R\to\R$ is polytime computable,}
\item{there exists some positive integer $k$ such that
\begin{enumerate}
\item{\label{itemat} $f$ is $n^k$-$\mathcal{C}$-definable,
}
\item{\label{itemf} $f$ is $n^k$-smooth.}
\end{enumerate}}
\end{enumerate}
\end{theorem}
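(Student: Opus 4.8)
The plan is to follow the architecture of the proof of Theorem~\ref{th:lip}, with Proposition~\ref{prop:maversiond} playing the role of Proposition~\ref{prop:maversion_comp}, and to be careful about how the additive constants combine: the $\tfrac14$ of Definition~\ref{dfn:approximation_1}, the $1$ from a polytime computable integer approximation, the $2$ in the definition of $n^k$-$\mC$-definable, the smoothness constant $M$, and the $1$ produced by taking a floor.

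\emph{The direction $(1)\Rightarrow(2)$.} Assuming $f$ polytime computable, I would invoke the $(1)\Rightarrow(2)$ half of Proposition~\ref{prop:maversiond} (with $\epsilon$ as small as one wishes, e.g.\ the value $0$ realized by the explicit construction there). This yields an integer $k$ — read off from the polynomial modulus $m(n',n)=(n'+n)^{k-1}$ of $f$ — a function $g_0\colon\R^3\to\R$ with $g_0(x,y,z)=yf(x/\mysharpa{k}{yz})$ for $y,z\ge1$, a polytime computable integer approximation $h_0\colon\Z^3\to\Z$ of $g_0$ with $|h_0-g_0|\le1$ on $\Z^3$, and the inequality $y\,|f(x_1/\mysharpa{k}{yz})-f(x_2/\mysharpa{k}{yz})|\le1$ whenever $|x_1-x_2|\le1$ and $z>4|x_1|/\mysharpa{k}{yz}$ (this is condition~\eqref{itemc} for $g_0$ read through $g_0(x,y,z)=yf(x/\mysharpa{k}{yz})$). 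The last inequality is exactly ``$f$ is $n^k$-smooth'' with $M=1$, so item~\eqref{itemf} holds. For item~\eqref{itemat} I would note that $h_0$ is a polytime computable integer function, so the hypothesis that $\mC$ \emph{peacefully} approximates the polytime computable integer functions supplies a \emph{peaceful} $\tilde g\in\mC$ with $|\tilde g-h_0|\le\tfrac14$ on $\Z^3$; chaining with $|h_0-g_0|\le1$ and $g_0(x,y,z)=yf(x/\mysharpa{k}{yz})$ gives $|\tilde g(x,y,z)-yf(x/\mysharpa{k}{yz})|\le\tfrac54\le2$ for all $x\in\Z$, $y,z\in\N^{\ge1}$, so $f$ is $n^k$-$\mC$-definable through the peaceful witness $\tilde g$.

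\emph{The direction $(2)\Rightarrow(1)$.} Here I would take the peaceful witness $g\in\mC$ of $n^k$-$\mC$-definability together with one of its polytime computable integer approximations $h\colon\Z^3\to\Z$, so that $|h-g|\le1$ on $\Z^3$ and $|g(x,y,z)-yf(x/\mysharpa{k}{yz})|\le2$ for $x\in\Z$, $y,z\in\N^{\ge1}$, $z>4|x|/\mysharpa{k}{yz}$. Functions of $\mC$ need not be polytime computable over $\R$, so the oracle machine cannot call $g$; but it can call the polytime integer function $h$, and — crucially — the machine from the $(2)\Rightarrow(1)$ proof of Proposition~\ref{prop:maversiond} only ever queries its auxiliary function at \emph{integer} triples whose last two components are powers of two. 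So I would run that machine verbatim with $h$ in place of its auxiliary function: after a preliminary query bounding $\len{x}$ by some $\len{d}$, read $d=\varphi((\len{d}+n+c)^k+1)$ (with $c=4$, and $c=5$ in the boundary case $\len{d}=\len{x}+1$), set $i=\lfloor 2^{(\len{d}+n+c)^k}d\rfloor$, and output the appropriately scaled $h(i,2^n,2^{\len{d}+c})$, running the whole computation at $n+O(1)$ in place of $n$ to absorb constants. Correctness is then the bookkeeping of Equations~\eqref{eqn:char_of_compact_5} and \eqref{eqn:main_1}--\eqref{eqn:main_6}, with the role previously played by conditions~\eqref{itemaa}, \eqref{itema}, \eqref{itemc} for $g_0$ now distributed over $|h-g|\le1$, the $\le2$ bound of $n^k$-$\mC$-definability, and the $n^k$-smoothness of $f$ — the last applied, exactly as there, only to pairs of the form $\lfloor 2^{(\len{d}+n+c)^k}d\rfloor/\mysharpa{k}{2^n2^{\len{d}+c}}$ and $\lfloor 2^{(\len{d}+n+c)^k}x\rfloor/\mysharpa{k}{2^n2^{\len{d}+c}}$, or $\lfloor 2^{(\len{d}+n+c)^k}x\rfloor/\mysharpa{k}{2^n2^{\len{d}+c}}$ and $x$, which are at most $1$ apart. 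One could instead repackage $h$ as a step function $G\colon\R^3\to\R$ and cite Proposition~\ref{prop:maversiond} directly; I would keep that in reserve, since it forces verifying conditions~\eqref{itema}--\eqref{itemc} at \emph{all} reals, which is where peacefulness of $g$ would become the tool to use.

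\emph{Main obstacle.} I expect no conceptual difficulty — the argument is essentially a re-run of Proposition~\ref{prop:maversiond} once the constants are laid out — so the real work is keeping the additive constants admissible: $\tfrac14+1\le2$ on the forward side, and on the backward side a total error of the form $O(1)\cdot2^{-n}$ (with the $O(1)$ depending only on $M$) that is pushed below $2^{-n}$ by the fixed upward shift of $n$, together with the exponent alignment $(\len{d}+n+c)^k=\lfloor\log_2(2^n2^{\len{d}+c})\rfloor^k$ that makes the rescaled integer argument of $h$ land within $1$ of $x$. The boundary case $\len{d}=\len{x}+1$ is handled, exactly as in Proposition~\ref{prop:maversiond}, by replacing $c=4$ with $c=5$.
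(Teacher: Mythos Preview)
Your proposal is correct. The $(1)\Rightarrow(2)$ direction is essentially identical to the paper's: both take the explicit $g_0$ from the $(1)\Rightarrow(2)$ half of Proposition~\ref{prop:maversiond}, pass to a polytime integer approximation, then to a peaceful $\tilde g\in\mC$ via the $\tfrac14$-approximation hypothesis, and chain the constants to land at most $2$; $n^k$-smoothness is read off directly from condition~\eqref{itemc} for $g_0$.

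The $(2)\Rightarrow(1)$ direction is where you diverge. The paper does \emph{not} re-run the machine; it applies Proposition~\ref{prop:maversiond} formally to the peaceful witness $g\in\mC$ itself. Condition~\eqref{itemaa} comes from the hypothesis on $\mC$; condition~\eqref{itemc} is first established at integer $x_1,x_2$ (via the $\le2$ definability bound plus $n^k$-smoothness), then extended to real $x_1,x_2$ using peacefulness of $g$ through floors and ceilings; condition~\eqref{itema} is then obtained for real $w$ by the same triangle-inequality pattern. Your primary route instead runs the machine with the integer approximation $h$ and uses $n^k$-smoothness of $f$ directly to bridge $\lfloor 2^{(\len d+n+c)^k}x\rfloor$ to $2^{(\len d+n+c)^k}x$ and $\lfloor 2^{(\len d+n+c)^k}d\rfloor$ to $\lfloor 2^{(\len d+n+c)^k}x\rfloor$, never evaluating $g$ off the integers. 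This is valid and in fact slightly tighter: it shows that peacefulness of the witness is not actually needed for this implication (it is imposed only by the definition of $n^k$-$\mC$-definability). The paper's route buys modularity---a clean citation of Proposition~\ref{prop:maversiond}---at the cost of the extra peacefulness argument; your route is more direct but repeats the machine construction. Your ``reserve'' alternative (package $h$ as a function on $\R^3$ and cite the proposition) is the one closer in spirit to what the paper does, except that the paper uses $g$ itself rather than a step function of $h$.
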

\begin{proof}
$(1)\Rightarrow(2):$
By Proposition \ref{prop:maversiond} there exists a function $g\colon\R^3\to\R$
such that Equation \eqref{eq:etoileepsilond} holds with $\epsilon=3/4$.
Such equation can be rewritten as follows (through change of variables).

\begin{myequation}
\label{eqn:avoid_lip_5}
\forall x\in\R\;\forall y,z\in\R^{\ge1},& z> \frac{4|x|}{\mysharpa{k}{yz}} \colon\\
& |g(x,y,z)-yf(\frac{x}{\mysharpa{k}{yz}})|\le \frac{3}{4}
\end{myequation}

Also, using Proposition \ref{prop:maversiond},
$g$ has a polytime computable integer approximation $h$, that is,

\begin{myequation}
\label{eqn:avoid_lip_4}
\forall x\in\Z, \forall y,z\in\N^{\ge1}: |g(x,y,z)-h(x,y,z)| \le 1
\end{myequation}

Now, by hypothesis, there exists some peaceful real function $\tilde{h} \in \mC$
that approximates $h$

\begin{myequation}
\label{eqn:avoid_lip_3}
\forall x\in\Z,\forall y,z\in \N^{\ge1}:|\tilde{h}(x,y,z) - h(x,y,z)| \le 1/4
\end{myequation}

Then, from Equation \eqref{eqn:avoid_lip_4} and Equation \eqref{eqn:avoid_lip_3} we have

\begin{myequation}
\label{eqn:avoid_lip_6}
\forall x\in\Z,\forall y,z\in \N^{\ge1}\colon |\tilde{h}(x,y,z)- g(x,y,z)|\le 1+\frac14 = \frac{5}{4}
\end{myequation}

From Equation \eqref{eqn:avoid_lip_5} and Equation \eqref{eqn:avoid_lip_6} we have

\begin{myequation}
\label{eqn:from_R_to_approx_Z3_1}
\forall x\in \Z\;&\forall y,z\in \N^{\ge 1}, z>\frac{4|x|}{\mysharpa{k}{yz}}\colon \\
& |\tilde{h}(x,y,z)-yf(\frac{x}{\mysharpa{k}{yz}})|\le \frac{5}{4}+\frac{3}{4}=2
\end{myequation}

This proves (2a) of the current theorem.
Now assume $x_1,x_2\in\R$ such that $|x_1-x_2|\le1$ and assume
$y,z\in\R   ^{\ge1}$ such that $z>\frac{4|x_1|}{\mysharpa{k}{yz}}$.
Then

\begin{align*}
|yf(\frac{x_1}{\mysharpa{k}{yz}})&-yf(\frac{x_2}{\mysharpa{k}{yz}})|\le |yf(\frac{x_1}{\mysharpa{k}{yz}})-g(x_1,y,z)|+\\
& |g(x_1,y,z)-g(x_2,y,z)|+|g(x_2,y,z)-yf(\frac{x_2}{\mysharpa{k}{yz}})|\\
& =\frac{3}{4} + |g(x_1,y,z)-g(x_2,y,z)| + \frac{3}{4},\qquad \textit{applying Equation \eqref{eqn:avoid_lip_5}}\\
& = \frac{3}{2} + M, \qquad \textit{applying Equation \eqref{eqn:smoothness_of_g}}
\end{align*}
This proves \eqref{itemf} of the current theorem.\\

\noindent $(2)\Rightarrow(1):$ This is proven through the use of part (2) of Proposition
\ref{prop:maversiond} as follows. Since $f$ is $n^k$-$\mathcal{C}$-definable,
there exists a peaceful function $g\in\mathcal{C}$ such that

\begin{myequation}
\label{eqn:avoid_lip_7}
\forall x\in \Z\;\forall y,z\in \N^{\ge 1}, z>\frac{4|x|}{\mysharpa{k}{yz}}  \colon|g(x,y,z)-yf(\frac{x}{\mysharpa{k}{yz}})|\le 2
\end{myequation}

From the hypothesis of this theorem $g$ has a polytime
computable integer approximation. Hence, condition \eqref{itemaa} of Proposition \ref{prop:maversiond} is satisfied.
Assume $x_1,x_2\in\Z$ such that $|x_1-x_2|\le 1$ and assume $y,z\in\N^{\ge1}$ such that $z>\frac{4\max\{|x_1|,|x_2|\}}{\mysharpa{k}{yz}}$.
From the hypothesis of the theorem we have $f$ is $n^k$-smooth, hence there exists some positive integer $M$ such that

\begin{myequation}
\label{eqn:avoid_lip_10}
y |f(\frac{x_1}{\mysharpa{k}{yz}})-f(\frac{x_2}{\mysharpa{k}{yz}})|\le M
\end{myequation}

Applying Equation \eqref{eqn:avoid_lip_7} to $x_1$ and $x_2$ we have

\begin{myequation}
\label{eqn:avoid_lip_8}
&|g(x_1,y,z)-yf(\frac{x_1}{\mysharpa{k}{yz}})|\le 2\\
&|g(x_2,y,z)-yf(\frac{x_2}{\mysharpa{k}{yz}})|\le 2
\end{myequation}

Then
\begin{myequation}
\label{eqn:two_successive_values_of_h}
|g(x_1&,y,z)-g(x_2,y,z)|\le |g(x_1,y,z)-yf(\frac{x_1}{\mysharpa{k}{yz}})| + \\
& y |f(\frac{x_1}{\mysharpa{k}{yz}})-f(\frac{x_2}{\mysharpa{k}{yz}})| + |yf(\frac{x_2}{\mysharpa{k}{yz}})-g(x_2,y,z)|\\
& \le 2 + y |f(\frac{x_1}{\mysharpa{k}{yz}})-f(\frac{x_2}{\mysharpa{k}{yz}})| + 2,\;\textit{from Eq. \eqref{eqn:avoid_lip_8}}\\
& = 4 + y |f(\frac{x_1}{\mysharpa{k}{yz}})-f(\frac{x_2}{\mysharpa{k}{yz}})|\\
& \le 4 + M = M' ,\; \textit{from Eq. \eqref{eqn:avoid_lip_10}}
\end{myequation}

Now assume $u,v\in\R$ such that $|u-v|\le1$. And assume $y,z\in\N^{\ge1}$ such that $z>\frac{4\max\{|u|,|v|\}}{\mysharpa{k}{yz}}$. Then

\begin{myequation}
\label{eqn:avoid_lip_11}
|g&(u,y,z)-g(v,y,z)| \le  |g(u,y,z)-g(\lfloor u\rfloor,y,z)| + \\
&\quad\; |g(\lfloor u\rfloor,y,z) - g(\lfloor v\rfloor,y,z)| + |g(\lfloor v\rfloor,y,z) - g(v,y,z)| \\
& \le |g(\lceil u\rceil,y,z)-g(\lfloor u\rfloor,y,z)| + |g(\lfloor u\rfloor,y,z) - g(\lfloor v\rfloor,y,z)| + \\
& \quad\; |g(\lfloor v\rfloor,y,z) - g(\lceil v\rceil,y,z)|,\;\textit{since g is peaceful}\\
& \le M' + |g(\lfloor u\rfloor,y,z) - g(\lfloor v\rfloor,y,z)| + M',\; \textit{from Eq. \eqref{eqn:two_successive_values_of_h}}\\
& \le M' + M' + M', \; \textit{from Eq. \eqref{eqn:two_successive_values_of_h} and the fact that $|u-v|\le1$}\\
& \le 3M'
\end{myequation}

Hence, condition \eqref{itemc} of Proposition \ref{prop:maversiond} is satisfied
(note that only the integer values of $y$ and $z$ are used throughout the proof of $(2)\Rightarrow(1)$ in Proposition \ref{prop:maversiond}).
Assume $w\in\R$ and $y,z\in\N^{\ge 1}$ such that $z>\frac{4|w|}{\mysharpa{k}{yz}}$. Then

\begin{myequation}
|g&(w,y,z)-yf(\frac{w}{\mysharpa{k}{yz}})|\le |g(w,y,z)-g(\lfloor w\rfloor,y,z)| + \\
&\quad\;|g(\lfloor w\rfloor,y,z)-yf(\frac{\lfloor w\rfloor}{\mysharpa{k}{yz}})| +
|yf(\frac{\lfloor w\rfloor}{\mysharpa{k}{yz}})-yf(\frac{w}{\mysharpa{k}{yz}})|\\
& \le 3M'+ |g(\lfloor w\rfloor,y,z)-yf(\frac{\lfloor w\rfloor}{\mysharpa{k}{yz}})| + \\
& \quad\; |yf(\frac{\lfloor w\rfloor}{\mysharpa{k}{yz}})-yf(\frac{w}{\mysharpa{k}{yz}})|, \; \textit{from Eq. \eqref{eqn:avoid_lip_11}}\\
& \le 3M'+ 2 + |yf(\frac{\lfloor w\rfloor}{\mysharpa{k}{yz}})-yf(\frac{w}{\mysharpa{k}{yz}})|,\;\textit{from Eq.
\eqref{eqn:avoid_lip_7}}\\
& \le 3M'+2+M=M'',\;\textit{$f$ is $n^k$-smooth and using Eq. \eqref{eqn:avoid_lip_10}}
\end{myequation}

Hence, condition \eqref{itema} of Proposition \ref{prop:maversiond} is satisfied.
This completes the proof of the theorem.
\end{proof}

The previous results can be generalized to any complexity class as
indicated by the following corollary.

\begin{corollary}
\label{machinchose}
Let $\mathcal{D}$ be some class of functions from $\N$ to $\N$ above the class of polynomial functions and closed under composition.
Consider a class $\mC$ of real functions that peacefully approximates
integer functions computable in time $\mathcal{D}$.
And whose functions have integer approximations computable in time $\mathcal{D}$
\footnote{A sufficient condition for that is restrictions to integers of functions from $\mC$ are computable in time $\mathcal{D}$.}.
Then the following are equivalent.
\begin{enumerate}
\item{a function $f\colon\R\to\R$ is computable in time $\mathcal{D}$,}
\item{there exists some $T \in \mathcal{D}$ such that
\begin{enumerate}
\item{$f$ is $T$-$\mathcal{C}$-definable,}
\item{$f$ is $T$-smooth.}
\end{enumerate}}
\end{enumerate}
\end{corollary}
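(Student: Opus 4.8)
The plan is to retrace the proofs of Proposition \ref{prop:maversiond} and Theorem \ref{th:two}, observing at each step that the only properties of the class of polynomial functions that are actually used are: closure under composition, closure under the arithmetic/bit operations needed to manipulate binary encodings (addition, multiplication, floor, shifts), and domination of the cost of a constant number of oracle queries. Since $\mathcal{D}$ sits above the polynomials and is closed under composition, every such use survives the substitution of ``polynomial'' by ``in $\mathcal{D}$''. No estimate on \emph{error bounds} (the additive constants $1/4$, $3/4$, $1$, $2$, $M$, $M'$, $M''$ appearing in Theorem \ref{th:two}) changes at all; only the bounds on \emph{running time} and the exponents inside $\mysharpa{T}{\cdot}$ are affected.

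First I would record the $\mathcal{D}$-analogue of Proposition \ref{thm:characterizing_ptime_over_real_functions_case}: a function $f\colon\R\to\R$ is computable in time $\mathcal{D}$ iff it has a modulus of continuity $m\in\mathcal{D}$ and an approximation function $\psi\colon\mathbb{D}\times\N\to\mathbb{D}$ with $|\psi(d,n)-f(d)|\le 2^{-n}$ computable in time $\mathcal{D}$ (measured in $\len{d}+n$). The forward direction is the Heine--Borel argument given there, with the bound $n_x\le q(k,n+3)$ replaced by $n_x\le T_0(k,n+3)$ for some $T_0\in\mathcal{D}$; the converse composes $m$, $\psi$ and the polynomial bookkeeping of the oracle machine, which stays in $\mathcal{D}$ by closure under composition.

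Next I would prove the $\mathcal{D}$-version of Proposition \ref{prop:maversiond}: $f$ is computable in time $\mathcal{D}$ iff there is $g\colon\R^3\to\R$ having an integer approximation computable in time $\mathcal{D}$, satisfying \eqref{eq:etoileepsilond} with $\mysharp{k}$ replaced by $\mysharp{T}$ for some $T\in\mathcal{D}$, and satisfying the smoothness bound \eqref{eqn:smoothness_of_g}. For $(2)\Rightarrow(1)$ the machine $M^{^{\varphi}}(n)$ is unchanged except that the exponent $(\len{d}+n+4)^k$ becomes $T(\len{d}+n+4)$; its running time is a composition of $T$, the shift/truncation cost, and the time-$\mathcal{D}$ cost of $h$, hence in $\mathcal{D}$. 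For $(1)\Rightarrow(2)$ one takes $T$ to be the modulus $m\in\mathcal{D}$ supplied by the previous paragraph (padded, if needed, so that $z\mapsto\mysharpa{T}{z}$ grows fast enough for the covering/extension estimates), defines $g$ by the same four-case split, and verifies conditions \eqref{itema}, \eqref{itemc} and \eqref{itemaa} exactly as written: the procedure computing $h$ does a $\mathcal{D}$-bounded shift, simulates $f$ at a fixed precision dictated by $m\in\mathcal{D}$, and truncates, so it runs in time $\mathcal{D}$.

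With these two $\mathcal{D}$-versions available, the proof of Theorem \ref{th:two} transfers with no further change beyond replacing ``polytime'' by ``in time $\mathcal{D}$'' everywhere and keeping the additive constants. I expect the only genuine friction to be bookkeeping rather than a new idea: one must present $T$-$\mathcal{C}$-definability and $T$-smoothness for \emph{one and the same} $T\in\mathcal{D}$, and this $T$ must simultaneously (i) dominate the modulus of $f$, so that the smoothness estimate \eqref{eqn:for_continuity_of_f} closes, and (ii) be reconstructible inside the time-$\mathcal{D}$ computation of the integer approximation $h$. This is precisely where closure under composition and the assumption that $\mathcal{D}$ lies above the polynomials are needed; packaging these dependencies cleanly is, I believe, the main obstacle.
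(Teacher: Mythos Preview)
Your proposal is correct and follows essentially the same approach as the paper, which simply states that the proof is similar to that of Theorem \ref{th:two} and singles out the one nontrivial point: if $f$ is computable in time $\mathcal{D}$ then its modulus of continuity lies in $\mathcal{D}$ (the $\mathcal{D}$-analogue of Proposition \ref{thm:characterizing_ptime_over_real_functions_case}, via the Heine--Borel argument). Your write-up is in fact considerably more detailed than the paper's two-line sketch, and your identification of where closure under composition and the ``above polynomials'' hypothesis enter is accurate.
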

\begin{proof}
The proof is similar to that of the previous theorem.
It should be noted that if $f$ is computable in time bounded by $\mathcal{D}$ then it has a modulus in $\mathcal{D}$. This is a direct consequence of
the generalization of Theorem 2.19 in
\cite{Ko91} to functions over the whole real line $\R$.
\end{proof}

\section{Applications}

In this section we apply the above results to algebraically characterize some computability and complexity
classes of real functions. We first obtain some restatements and extensions of already known results, using our framework. We then provide new results, in particular, the main result given by Theorems \ref{th:w1} and \ref{th:w2}
and Corollary \ref{cor:w3} which gives algebraic machine independent characterizations of polytime computable analysis functions.

Note that, we obtain characterizations that are valid for functions
defined on any closed interval (potentially infinite), including the whole real line.
In the case of the polytime computable functions, this is a brand new result. In the case of computable functions, this proves the generality of our framework.

A \emph{function algebra} $\mathcal{F}=[\mathcal{B};\mathcal{O}]$ is the smallest class of functions
containing a set of basic functions $\mathcal{B}$ and their closure under a set of operations $\mathcal{O}$.

\subsection{Elementarily computable functions: class $\mathcal{L}$ }

Let us now consider the class $\mathcal{L}$ defined in
\cite{CMC02}: $$\mathcal{L} = [0, 1, -1, \pi, U,\theta_3; COMP, LI]$$
where $\pi$ is the mathematical constant $\pi = 3.14..$, $U$ is the set of projection functions,
$\theta_3(x)=\max\{0, x^3\}$, $COMP$ is the classical composition
operator, and $LI$ is Linear Integration (solution to a system of linear differential equations).
From the constructions in \cite{CMC02}, we know that this class captures the
elementary integer functions.
In addition the following lemma follows from the constructions in \cite{FundamentaInformatica2006}.

\begin{lemma}
$\mathcal{L}$ is a class of real functions computable in elementary time that peacefully approximates total  elementarily computable integer functions.
\end{lemma}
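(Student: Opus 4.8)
The plan is to split the statement into its two halves: first, that every function of $\mathcal{L}$ is computable in elementary time as a real function in the sense of recursive analysis; second, that the subclass of \emph{peaceful} functions of $\mathcal{L}$ approximates, within $1/4$ on the integers, every total elementarily computable integer function. I would prove the two halves by entirely different means, the first by a routine structural induction and the second by tracing through (and lightly refining) the representation constructions of \cite{CMC02} and \cite{FundamentaInformatica2006}.

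For the first half I would induct on the definition $\mathcal{L}=[0,1,-1,\pi,U,\theta_3;COMP,LI]$. The base cases are immediate: $0,1,-1,\pi$ are elementarily computable constants, the projections $U$ are trivial, and $\theta_3(x)=\max\{0,x^3\}$ is a composition of a polynomial with $\max$, hence elementarily computable with an elementary modulus. For $COMP$ one invokes the standard fact that a composition of elementarily computable real functions is again elementarily computable, composing both the algorithms and the moduli. The one substantive case is $LI$: the solution of a linear system $\mathbf{y}'(t)=A(t)\mathbf{y}(t)+\mathbf{b}(t)$ whose coefficients are already-constructed functions of $\mathcal{L}$ must be shown elementarily computable. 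This is exactly what \cite{CMC02} and \cite{FundamentaInformatica2006} establish — elementary a priori bounds on the solution and its modulus together with a numerically stable discretization (e.g.\ an Euler scheme with an elementary number of steps) — and the argument goes through uniformly over the whole real line since all the relevant bounds and moduli stay elementary. So this half is essentially a restatement of those references.

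For the second half, recall from \cite{CMC02} that $\mathcal{L}$ captures the total elementarily computable integer functions; passing to arity three as allowed by the remark following Proposition~\ref{prop:maversiond}, for each total elementary $h\colon\Z^3\to\Z$ there is some $H\in\mathcal{L}$ with $|H(\vec{n})-h(\vec{n})|\le 1/8$ on all integer tuples. The real content is to obtain such a representative that is \emph{peaceful}, i.e.\ that for integer $y,z\ge1$ takes on each $x\in[n,n+1]$ a value between $h(n,y,z)$ and $h(n+1,y,z)$. The idea is to arrange the capture construction so that the real function it produces is, in its first argument, a smooth monotone interpolation of the integer values: using $LI$ one builds a fixed "smooth staircase" block — the antiderivative of a $\theta_3$-bump supported on a unit interval is a $C^1$ function that is constant on two half-line plateaus and monotone across the transition — and then assembles, uniformly in $n$, the transition from height $h(n,y,z)$ to height $h(n+1,y,z)$; since scaling a fixed-shape transition by an $\mathcal{L}$-definable height and translating it are themselves $\mathcal{L}$-operations, and by the first half everything stays elementarily computable, the resulting $\tilde{h}$ lies in $\mathcal{L}$, agrees with $h$ at the integers (hence within $1/4$), and is peaceful by construction. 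The main obstacle is precisely this uniform assembly: producing, inside $\mathcal{L}$ and uniformly over all $n$, a monotone transition between the two $\mathcal{L}$-given heights $h(n,y,z)$ and $h(n+1,y,z)$ — which may be exponentially large — while keeping the whole object an elementary-time computable member of $\mathcal{L}$; this is the part genuinely inherited from the constructions of \cite{FundamentaInformatica2006}, and once it is in hand the rest is bookkeeping.
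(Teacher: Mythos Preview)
The paper does not actually prove this lemma: it simply asserts that it ``follows from the constructions in \cite{FundamentaInformatica2006}'' (with \cite{CMC02} cited just before for the integer-capture part). Your proposal is therefore not so much a different proof as an honest attempt to unpack what those citations are being asked to deliver, and in that respect it is organized correctly: the two halves you identify are exactly the two things the references are expected to supply, and your structural induction for elementary-time computability of $\mathcal{L}$ (with $LI$ as the only nontrivial case) is the standard route taken in those papers.

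Your treatment of the second half is also in the right spirit, but be aware that the ``uniform assembly'' step you flag as the main obstacle really is where all the work lives, and your sketch underspecifies it. Building a single smooth staircase block from $\theta_3$ and $LI$ is easy; what is not routine is producing, \emph{inside $\mathcal{L}$ and without any recursion or iteration operator}, a function whose value on $[n,n+1]$ interpolates monotonically between $h(n,y,z)$ and $h(n+1,y,z)$ for \emph{all} $n$ simultaneously. One cannot simply ``translate and scale'' a fixed block for each $n$ separately and then sum, since $\mathcal{L}$ has no infinite-sum schema; the actual constructions in \cite{FundamentaInformatica2006} work by simulating the elementary computation of $h$ itself via carefully engineered linear ODEs whose flow already carries the needed monotonicity in the first argument. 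Your proposal acknowledges this dependency, which is fine given that the paper does too, but if you were to write a self-contained proof you would need to trace through that simulation rather than treat the interpolation as a post-processing step applied to an arbitrary $H$ close to $h$.
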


Again using the above results we can obtain characterizations of the class of elementarily computable analysis
functions.

\begin{proposition}[Variation of \cite{CMC02}] A Lipschitz function $f:\R \to \R$ is computable in elementary time iff it is $\mathcal{L}$-definable.
\end{proposition}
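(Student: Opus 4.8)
The plan is to combine the preceding lemma with an elementary-time analog of Theorem~\ref{th:lip}. First I would establish the elementary version of Proposition~\ref{prop:maversion_comp}: for a Lipschitz $f:\R\to\R$, $f$ is computable in elementary time iff there is an elementary-time computable $g:\R\times\R\to\R$ with $|g(x,y)-yf(\frac{x}{y})|\le\epsilon$ for all $x\in\Z$, $y\in\N^{\ge 1}$. The two directions go through verbatim as in Proposition~\ref{prop:maversion_comp}: for $(1)\Rightarrow(2)$ one takes $g(x,y)=yf(\frac{x}{y})$ for $y\ge 1$, $g(x,0)=0$, and a piecewise-linear interpolation otherwise, a construction under which elementary-time computability is preserved; for $(2)\Rightarrow(1)$ one runs the same oracle machine $M^{\varphi_x}(n)$, querying $\varphi_x$ at precision $n'=n+2+a$ (with $K\le 2^a$ a Lipschitz bound for $f$) and simulating the machine for $g$ with $\varphi_y\equiv 2^{n'}$, whose running time is then elementary in $n$ and in $k$ because the elementary functions are closed under composition and dominate the polynomials.

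Next I would lift Theorem~\ref{th:lip} to elementary time in exactly the same way it is proved there for polynomial time (this is the Lipschitz, $\Z^2$ specialization of the reasoning behind Corollary~\ref{machinchose}): if $\mathcal{C}$ is a class of elementary-time computable real functions that approximates the elementarily computable integer functions, then a Lipschitz $f$ is computable in elementary time iff it is $\mathcal{C}$-definable in the sense of Definition~\ref{dfn:approximation_1}. The left-to-right direction builds $h(x,y)=\lfloor e\rfloor$ where $e$ is the output of the machine for $g$ run at precision $1$ with exact integer oracle answers; $h$ is elementary-time computable and satisfies $|h-\lfloor g\rfloor|\le 1$, so the approximation hypothesis yields $\tilde g\in\mathcal{C}$ with $|\tilde g-g|\le 9/4$ on $\Z^2$, whence $|\tilde g(x,y)-yf(\frac{x}{y})|\le 3$. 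The right-to-left direction is the elementary version of Proposition~\ref{prop:maversion_comp} above taken with $\epsilon=3$, using that functions in $\mathcal{C}$ are elementary-time computable. All the arithmetic with the constants $1/4$, $3/4$, $9/4$, $3$ is identical to the proof of Theorem~\ref{th:lip}, since none of those numbers interacts with the complexity class.

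Finally I would invoke the preceding lemma: $\mathcal{L}$ consists of real functions computable in elementary time and peacefully approximates, hence a fortiori approximates, the total elementarily computable integer functions (one simply takes the peaceful witness). Plugging $\mathcal{C}=\mathcal{L}$ into the elementary-time version of Theorem~\ref{th:lip} gives precisely the claim: a Lipschitz $f:\R\to\R$ is computable in elementary time iff $f$ is $\mathcal{L}$-definable, i.e. iff there is some $\tilde g\in\mathcal{L}$ with $|\tilde g(x,y)-yf(\frac{x}{y})|\le 3$ for all $x\in\Z$, $y\in\N^{\ge 1}$ (here one uses that $\mathcal{L}$ contains binary functions $\R^2\to\R$ via the projections in $U$ and composition, so that "$\mathcal{L}$-definable" is meaningful).

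I expect the only real subtlety is bookkeeping-level: verifying that every place where polynomial time was invoked in Proposition~\ref{prop:maversion_comp} and in the proof of Theorem~\ref{th:lip} — simulating the machine for $g$, answering oracle queries at the chosen precision, composing time bounds, taking floors — is a closure property equally enjoyed by the elementary functions, which it is since that class is closed under composition and above the polynomials. Because the constant $\epsilon$ and the specific numbers of Definition~\ref{dfn:approximation_1} are inert with respect to the complexity class, no genuinely new estimate has to be made; the work is purely in checking that the two transfer results survive the substitution of "elementary time" for "polynomial time."
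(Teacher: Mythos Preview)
Your proposal is correct and is exactly the argument the paper intends: the paper does not write out a proof for this proposition but simply says ``using the above results,'' meaning one applies the elementary-time analogue of Theorem~\ref{th:lip} (the Lipschitz instance of the generalization in Corollary~\ref{machinchose}) together with the preceding lemma on $\mathcal{L}$. Your observation that peaceful approximation implies approximation, and that only the latter is needed in the Lipschitz case, is the right way to bridge the lemma to the Lipschitz theorem.
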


\begin{proposition}[Extension of \cite{CMC02}]
Let $f:\R \to \R$ be some $T$-smooth function, for some elementary function $T\colon\N\to\N$.
Then $f$ is computable in elementary time iff it is $T$-$\mathcal{L}$-definable.
\end{proposition}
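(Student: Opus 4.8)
The plan is to recognize this proposition as nothing more than an instantiation of the general framework already developed, namely Corollary \ref{machinchose}, applied with the complexity class $\mathcal{D}$ taken to be the class of elementary functions $\N\to\N$. First I would check that $\mathcal{D}$ meets the hypotheses of the corollary: the elementary functions sit above the polynomials and are closed under composition, so the corollary applies verbatim. Then I would invoke the preceding lemma, which states precisely that $\mathcal{L}$ is a class of real functions computable in elementary time that peacefully approximates the total elementarily computable integer functions; this matches the ``peacefully approximates integer functions computable in time $\mathcal{D}$'' hypothesis. The second hypothesis of the corollary --- that functions of $\mathcal{C}$ have integer approximations computable in time $\mathcal{D}$ --- follows from the sufficient condition noted in the corollary's footnote, since restrictions to the integers of functions in $\mathcal{L}$ are (total) elementarily computable, again by the constructions in \cite{CMC02,FundamentaInformatica2006}.

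With the hypotheses verified, the conclusion of Corollary \ref{machinchose} reads: $f\colon\R\to\R$ is computable in elementary time iff there exists some elementary $T\in\mathcal{D}$ such that $f$ is $T$-$\mathcal{L}$-definable and $f$ is $T$-smooth. The remaining gap between this and the statement of the proposition is that the proposition assumes $f$ is $T$-smooth for a \emph{given} elementary $T$ and asks for $T$-$\mathcal{L}$-definability with that \emph{same} $T$. So I would argue both directions carefully. For the left-to-right direction, if $f$ is computable in elementary time, the corollary yields some elementary $T_0$ with $f$ being $T_0$-$\mathcal{L}$-definable and $T_0$-smooth; I then need to transfer this to the prescribed $T$. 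For the right-to-left direction, if $f$ is $T$-$\mathcal{L}$-definable (for the given $T$) and $T$-smooth, then in particular the ``exists $T\in\mathcal{D}$'' clause of the corollary is witnessed by this very $T$, so $f$ is computable in elementary time immediately.

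The main obstacle, then, is the left-to-right direction's dependence on the parameter $T$. The resolution I would pursue is a monotonicity/padding observation about the roles of $T$ and the smoothness parameter: enlarging $T$ only shrinks the argument $\frac{x}{\mysharpa{T}{yz}}$ fed to $f$ and correspondingly tightens the intervals on which continuity information is required, so a $T_0$-smooth function that is computable in elementary time is automatically $T'$-smooth and $T'$-$\mathcal{L}$-definable for any elementary $T'\ge T_0$ (one re-derives the definability witness by composing with the larger scaling, which stays in $\mathcal{L}$ since $\mathcal{L}$ is closed under $COMP$ and contains the relevant scaling functions). Combined with the hypothesis that $f$ is $T$-smooth for the given $T$, and taking $T'$ dominating both $T$ and $T_0$, one gets a single elementary parameter that works; a final adjustment, using that $T$-smoothness is part of the hypothesis, pins the statement down to the exact $T$ requested. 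If a cleaner route exists, it is to observe that the proof of Theorem \ref{th:two} (hence Corollary \ref{machinchose}) actually produces, in the $(1)\Rightarrow(2)$ direction, a $T$ built directly from the modulus of $f$, and the modulus can be taken to be any elementary function dominating the true modulus; this makes the parameter $T$ essentially free to choose among sufficiently large elementary functions, which is all that is needed.
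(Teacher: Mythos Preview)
Your approach coincides with the paper's: the proposition is presented there without proof, as a direct instance of Corollary~\ref{machinchose} (with $\mathcal{D}$ the class of elementary functions) together with the preceding lemma on $\mathcal{L}$. You correctly flag a point the paper leaves implicit, namely that the corollary only furnishes \emph{some} elementary $T_0$ whereas the proposition fixes $T$ in advance via the smoothness hypothesis; your ``cleaner route'' --- rerunning the $(1)\Rightarrow(2)$ construction of Proposition~\ref{prop:maversiond} and Theorem~\ref{th:two} with the given $T$, which succeeds because $T$-smoothness is exactly what is needed to verify condition~\eqref{itemc} there --- is the right resolution and is presumably what the authors intend. Your first route, passing to a dominating $T'\ge\max(T,T_0)$ and then making a ``final adjustment'' back to $T$, is not fully justified as written: the rescaling that would convert a $T'$-definability witness into a $T$-definability witness need not send integers to integers nor preserve peacefulness of the witness in $\mathcal{L}$, so that step should be dropped in favour of the direct argument.
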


As in \cite{CMC02,FundamentaInformatica2006}, we can also characterize in a similar way the functions computable in time
$\mathcal{E}_n$ for $n\geq 3$,  where $\mathcal{E}_n$ represents the
$n$-th level of the Grzegorczyk hierarchy.

\subsection{Recursive functions: class $\mathcal{L}_{\mu}$}

Let us now consider the class $\mathcal{L}_{\mu}$ defined in
\cite{FundamentaInformatica2006}: $$\mathcal{L}_{\mu} = [0, 1, U, \theta_3; COMP, LI, \mathit{UMU}]$$
where a zero-finding operator $\mathit{UMU}$ has been added.
This class is known (see \cite{FundamentaInformatica2006}) to extend the class of total (integer) recursive functions.
From the constructions in this latter paper one can show:

\begin{lemma}
$\mathcal{L}_{\mu}$ is a class of computable functions that peacefully approximate the class of total integer recursive functions.
\end{lemma}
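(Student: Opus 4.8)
The plan is to establish the two components packaged in the statement: first, that every function of $\mathcal{L}_{\mu}$ is computable in the sense of recursive analysis, and second, that the peaceful functions of $\mathcal{L}_{\mu}$ approximate, in the sense of Definition~\ref{dfn:approximation_1} with arities adjusted in the obvious way, every total integer recursive function. The second part is essentially a matter of re-reading the constructions of \cite{FundamentaInformatica2006} through the lens of peaceful approximation, so the bulk of the argument is bookkeeping; the subtle point, as explained below, is the zero-finding operator.

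For computability I would argue by structural induction on $[0, 1, U, \theta_3; COMP, LI, \mathit{UMU}]$. The base functions $0$, $1$, the projections $U$, and $\theta_3(x)=\max\{0,x^3\}$ are all polytime, hence computable. Closure under $COMP$ is standard. For $LI$ (linear integration) one uses the classical fact that the solution of a linear ODE system with computable coefficients is computable on every bounded time interval; the only thing to check is that the growth of the solution is controlled well enough to produce a modulus of continuity on each $[-2^k,2^k]$. The genuinely delicate operator is $\mathit{UMU}$: a zero-finding operator produces partial functions in general, but since we claim only computability (not a complexity bound), one relies on the fact---established in \cite{FundamentaInformatica2006}---that inside $\mathcal{L}_{\mu}$ this operator is used in a way that yields total computable functions, mirroring how unbounded minimization yields total recursive functions on the integer side.

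For the peaceful-approximation part, one invokes the constructions of \cite{FundamentaInformatica2006} showing that $\mathcal{L}_{\mu}$ contains real analogues of all total integer recursive functions. What needs to be verified beyond what is stated there is peacefulness: given a total recursive $g\colon\Z^2\to\Z$, one needs $\tilde g\in\mathcal{L}_{\mu}$ with $|\tilde g(x,y)-g(x,y)|\le 1/4$ at the integers and, in addition, $\tilde g(x,y,z)\in[\tilde g(\lfloor x\rfloor,y,z),\tilde g(\lceil x\rceil,y,z)]$ for integer $y,z$. The standard device is to arrange the dependence on the first argument as a monotone interpolation between consecutive integer values---a piecewise-linear ramp built from $\theta_3$---so that on each unit interval $[\lfloor x\rfloor,\lceil x\rceil]$ the value of $\tilde g$ lies between the two endpoint values by construction; one then checks that this interpolation preserves the $1/4$ accuracy at the integers and keeps the function inside $\mathcal{L}_{\mu}$.

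The main obstacle I expect is not the algebraic bookkeeping but the joint treatment of $\mathit{UMU}$: one must ensure both that its use within $\mathcal{L}_{\mu}$ keeps functions total and computable, and that the simulation of integer minimization can be carried out while still being able to interpolate monotonically in the ``value'' argument, so that peacefulness survives. Since these are essentially the content of the simulation in \cite{FundamentaInformatica2006}, the lemma reduces to making that reduction explicit rather than reproving the simulation from scratch.
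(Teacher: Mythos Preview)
Your proposal is correct and aligns with the paper, which offers no detailed argument for this lemma but simply asserts that it follows from the constructions in \cite{FundamentaInformatica2006}; you have made explicit the two ingredients (computability by structural induction, peaceful approximation via monotone interpolation between consecutive integer values) that the citation is meant to cover, and you correctly flag $\mathit{UMU}$ as the only nontrivial point. One minor slip: the interpolation device realized inside $\mathcal{L}_\mu$ from $\theta_3(x)=\max\{0,x^3\}$ is smooth rather than piecewise-linear, but this does not affect the structure of your argument.
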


And hence, as a consequence of Theorem \ref{th:lip} and Corollary
\ref{machinchose}, we obtain the following result for functions defined on a product of closed (potentially infinite) intervals $\mathcal{D}$:

\begin{proposition}[Variation of \cite{FundamentaInformatica2006}]
A Lipschitz function $f:\mathcal{D} \to \R$ is computable iff it is $\mathcal{L}_{\mu}$-definable.
\end{proposition}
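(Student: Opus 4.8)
The plan is to deduce this from the machinery of Section~4 in the same way Theorem~\ref{th:lip} is deduced from it, only now at the computability level and then lifted to an arbitrary product domain. The first observation is that the class of all total recursive functions lies above the polynomials and is closed under composition, so the proofs of Theorem~\ref{th:lip} and of Corollary~\ref{machinchose} go through verbatim with ``polynomial time'' replaced by ``recursive time''; and for total functions ``computable in recursive time'' is just ``computable'' (a total computable function is its own recursive running-time bound, and a computable real function has a computable modulus by the generalization of \cite[Theorem~2.19]{Ko91} already used in Corollary~\ref{machinchose}). The second observation is that, by the Lemma just stated, $\mathcal{L}_{\mu}$ is a class of computable real functions that peacefully approximates --- and hence, \emph{a fortiori}, approximates --- the total recursive integer functions; modulo the routine arity adjustments mentioned in the Remark after Proposition~\ref{prop:maversiond}, this makes $\mathcal{L}_{\mu}$ an admissible instance of the abstract class $\mC$ occurring in Theorem~\ref{th:lip}.

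For $\mathcal{D}=\R$ I would then simply rerun the proof of Theorem~\ref{th:lip}. In the direction ``computable $\Rightarrow$ $\mathcal{L}_{\mu}$-definable'': the computability analogue of Proposition~\ref{prop:maversion_comp} (with $\epsilon=\tfrac34$) supplies a computable $g\colon\R^2\to\R$ with $|g(x,y)-yf(\tfrac{x}{y})|\le\tfrac34$ for $x\in\Z$, $y\in\N^{\ge 1}$; from an oracle machine for $g$ one reads off a total recursive $h\colon\Z^2\to\Z$ with $|h(x,y)-\lfloor g(x,y)\rfloor|\le 1$; the approximation property of $\mathcal{L}_{\mu}$ gives $\tilde g\in\mathcal{L}_{\mu}$ with $|\tilde g(x,y)-h(x,y)|\le\tfrac14$; combining these with $|g-\lfloor g\rfloor|\le 1$ yields $|\tilde g(x,y)-yf(\tfrac{x}{y})|\le 3$, which is $\mathcal{L}_{\mu}$-definability. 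In the direction ``$\mathcal{L}_{\mu}$-definable $\Rightarrow$ computable'': the functions of $\mathcal{L}_{\mu}$ are computable, so the witnessing $\tilde g$ feeds directly into the computability analogue of Proposition~\ref{prop:maversion_comp}, $(2)\Rightarrow(1)$ with $\epsilon=3$; here the Lipschitz hypothesis is exactly what lets the oracle machine pass from the approximate values of $\tilde g$ at dyadics to a value of $f$ at an arbitrary real, the limit points included.

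To lift from $\R$ to a domain $\mathcal{D}=\prod_i[a_i,b_i]\subseteq\R^d$ (each factor a closed, possibly unbounded, interval with computable endpoints) I would first state the multivariate versions of Definition~\ref{dfn:lipschitz_fun}, of Definition~\ref{dfn:approximation_1} (with the defining inequality quantified over integer points $\tfrac{x}{y}$ that lie in $\mathcal{D}$), and of the $\R$-case just proved --- routine, per the Remark after Proposition~\ref{prop:maversiond}. Then I would use the nearest-point retraction $\pi\colon\R^d\to\mathcal{D}$, which for a product of intervals is the coordinatewise clamp $\pi(x)_i=\max(a_i,\min(b_i,x_i))$: it is $1$-Lipschitz, it fixes $\mathcal{D}$ pointwise, and it is computable. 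If $f\colon\mathcal{D}\to\R$ is Lipschitz and computable then so is $f\circ\pi\colon\R^d\to\R$, so by the $\R^d$ case there is $g\in\mathcal{L}_{\mu}$ with $|g(x,y)-y(f\circ\pi)(\tfrac{x}{y})|\le 3$ everywhere, and since $f\circ\pi$ agrees with $f$ on $\mathcal{D}$ the same $g$ witnesses $\mathcal{L}_{\mu}$-definability of $f$ over $\mathcal{D}$. Conversely, an $\mathcal{L}_{\mu}$-witness for $f$ over $\mathcal{D}$ drives an oracle machine for $f$ as in Proposition~\ref{prop:maversion_comp}, provided the sampled dyadic is clamped back into $\mathcal{D}$ by $\pi$ before $g$ is queried --- harmless because $\pi$ is $1$-Lipschitz and fixes $\mathcal{D}$.

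The step I expect to be delicate is precisely this passage to arbitrary $\mathcal{D}$: one must fix the ``correct'' reading of ``$\mathcal{L}_{\mu}$-definable over $\mathcal{D}$'' and check that the clamp/compose maneuver interacts cleanly with the defining inequalities at the unbounded ends of $\mathcal{D}$ (where $\pi$ is eventually the identity in some coordinates), and --- relatedly --- confirm that the multivariate generalization of Theorem~\ref{th:lip} really is as mechanical as advertised. Everything else is the bookkeeping already carried out for Theorem~\ref{th:lip}, rerun over a recursive time bound in place of a polynomial one, so no genuinely new idea beyond the domain handling should be required.
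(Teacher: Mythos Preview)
Your proposal is correct and follows essentially the same approach as the paper: the paper offers no explicit proof for this proposition, merely stating that it is ``a consequence of Theorem~\ref{th:lip} and Corollary~\ref{machinchose}'', i.e., rerun the Lipschitz framework with the class of total recursive functions in place of the polynomials --- which is exactly what you do. Your clamping/retraction argument for passing from $\R$ to a general product of closed intervals $\mathcal{D}$ is more explicit than anything the paper provides (the paper simply asserts throughout that the extension to arbitrary domains is routine), so on that point you supply detail the paper omits rather than diverge from it.
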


\begin{proposition}[Extension of \cite{FundamentaInformatica2006}]
Let $f:\mathcal{D} \to \R$ be some $T$-smooth function, for some total recursive function $T:\N\to\N$. Then
$f$ is computable iff it is $T$-$\mathcal{L}_{\mu}$-definable.
\end{proposition}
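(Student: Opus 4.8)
The plan is to read the statement off Corollary \ref{machinchose}, specialized to the complexity class $\mathcal{D}$ of all total recursive functions $\N\to\N$ and the function algebra $\mathcal{C}=\mathcal{L}_{\mu}$. First I would check the two hypotheses of that corollary: $\mathcal{D}$ contains all polynomials and is closed under composition, so it is ``above the polynomials and closed under composition''; by the Lemma above, the peaceful functions of $\mathcal{L}_{\mu}$ approximate the total recursive integer functions, which are exactly the integer functions ``computable in time $\mathcal{D}$''; and since the functions of $\mathcal{L}_{\mu}$ are computable, so are their restrictions to the integers, which is the sufficient condition stated in the corollary's footnote guaranteeing recursive integer approximations. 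Since Subsection 4.3 is phrased for $f\colon\R\to\R$ (or $\R^n\to\R$ via the Remark), I would also record once and for all the routine restriction argument, already used throughout Section 5, transferring everything to a function $f$ defined on a product of closed, possibly unbounded, intervals $\mathcal{D}$; this changes no estimate.

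The direction ``$T$-$\mathcal{L}_{\mu}$-definable $\Rightarrow$ computable'' then needs no new idea: the hypothesis that $f$ is $T$-smooth is exactly condition (2b) of Corollary \ref{machinchose}, $T$-$\mathcal{L}_{\mu}$-definability is condition (2a) for the same $T\in\mathcal{D}$, and the $(2)\Rightarrow(1)$ part of the corollary yields that $f$ is computable in time $\mathcal{D}$, i.e.\ computable.

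For ``computable $\Rightarrow$ $T$-$\mathcal{L}_{\mu}$-definable'' I would \emph{not} appeal to the corollary (whose existential quantifier over the recursive bound would change $T$), but instead rerun, with the \emph{given} $T$, the construction from the proof of $(1)\Rightarrow(2)$ of Proposition \ref{prop:maversiond}: set $g(x,y,z)=yf(x/\mysharpa{T}{yz})$ for $y,z\ge1$, with the analogous piecewise clauses otherwise. For this $g$ the defining inequality of $T$-$\mathcal{C}$-definability holds with slack $0$ on the integers; the ``smoothness of $g$'' requirement, condition (c) of part (2) of Proposition \ref{prop:maversiond}, is verbatim the $T$-smoothness of $f$ --- here the $T$-smoothness hypothesis plays exactly the role played by the modulus of continuity in Proposition \ref{prop:maversiond}; and computability of $f$ makes $g$ a computable real function, so $g$ admits a recursive integer approximation $h\colon\Z^3\to\Z$ (evaluate $g$, equivalently $f$ at the relevant dyadic, to sufficient precision and round). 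By the Lemma there is a \emph{peaceful} $\tilde g\in\mathcal{L}_{\mu}$ with $|\tilde g(x,y,z)-h(x,y,z)|\le\frac14$ on $\Z^3$; accumulating the constants gives $|\tilde g(x,y,z)-yf(x/\mysharpa{T}{yz})|\le\frac14+1+0<2$ for all integer $x$ and all $y,z\in\N^{\ge1}$ with $z>4|x|/\mysharpa{T}{yz}$, which is precisely $T$-$\mathcal{L}_{\mu}$-definability.

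I expect the only genuinely delicate point to be keeping the bound $T$ fixed throughout: because Corollary \ref{machinchose} only promises \emph{some} recursive bound, the forward implication really has to go through the explicit $g$ above, and one must check that the peaceful $\mathcal{L}_{\mu}$-approximant carries over the definability estimate with the very same $\mysharpa{T}{\cdot}$ rescaling. Everything else --- that the piecewise $g$ and its approximant stay inside $\mathcal{L}_{\mu}$, and that the argument survives the passage from $\R$ to a general domain $\mathcal{D}$ --- is routine and parallels the proofs of Theorems \ref{th:lip} and \ref{th:two}.
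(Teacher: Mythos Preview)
Your proposal is correct. The paper offers no proof beyond the sentence ``as a consequence of Theorem \ref{th:lip} and Corollary \ref{machinchose},'' so your argument is essentially the intended one, worked out in full. You are right to flag the quantifier mismatch: Corollary \ref{machinchose} only supplies \emph{some} $T'\in\mathcal{D}$ for which $f$ is $T'$-definable, not the given $T$, so the forward direction does require rerunning the construction $g(x,y,z)=yf(x/\mysharpa{T}{yz})$ with the fixed $T$, extracting a recursive integer approximation $h$, and then a peaceful $\tilde g\in\mathcal{L}_{\mu}$ via the Lemma, exactly as you do. (Note this step does not even use the $T$-smoothness hypothesis; that hypothesis is only consumed in the backward direction, as you say.) One small phrasing issue: $g$ is not a computable \emph{real} function, since $\mysharpa{T}{yz}=2^{T(\lfloor\log_2 yz\rfloor)}$ involves a floor and is discontinuous in $y,z$; but this is harmless, as only the integer restriction of $g$ is needed, and your parenthetical recipe for $h$ (evaluate $f$ at the relevant dyadic to enough precision and round) is what is actually required and is correct.
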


\subsection{Polynomial Time Computable Functions}

We are now ready to provide our main result: an algebraic characterization of polytime computable functions over the reals.
To do so, we define a class of real functions which are essentially extensions
to $\R$ of the Bellantoni-Cook class \cite{BelCoo92}. The latter class
was developed to exactly capture classical discrete polytime computability in an algebraic machine-independent way.
In the next definition any function $f(x_1,\ldots,x_m;y_1,\ldots,y_n)$ has two types
of arguments (see \cite{BelCoo92}): \emph{normal} arguments which come first followed by \emph{safe} arguments,
using `$;$' for separation. For any $n\in\Z$ we call
$[2n,2n+1]$ an even interval and $[2n+1,2n+2]$ an odd interval.

\begin{definition}
Define the function algebra
$$\mathcal{W}=[0,1,+,-,U,c,parity,p;SComp,SI]$$
\begin{enumerate}
\item{zero-ary functions for the constants $0$ and $1$,}
\item{a binary addition function: $+(;x,y)=x+y$,}
\item{a binary subtraction function: $-(;x,y)=x-y$,}
\item{a set of projection functions $U=\{U_i^j\colon i,j\in\N,i\le j\}$
where: \\$U_i^{m+n}(x_1,\ldots,x_m;x_{m+1},\ldots,x_{m+n})=x_i$,}
\item{a polynomial conditional function $c$ defined by:
\footnote{If $x=1$, the conditional is equal to $y$; if $x=0$, it is equal to $z$. Between $0$ and $1$, it stays between $y$ and $z$.}
\begin{myequation}
c(;x,y,z)=xy+(1-x)z
\end{myequation}}
\item{a continuous parity function:
\begin{myequation}
parity(;x)=\max\{0, \frac{\pi}{2} sin(\pi x)\}
\end{myequation}
Hence, $parity(;x)$ is non-zero if and only if $x$ lies inside an even interval.
Furthermore, for any
$n\in\Z$ the following holds: $\int_{2n}^{2n+1}parity(;x)dx=1$.}
\item{a continuous predecessor function $p$ defined by:
\begin{myequation}
p(;x) =  \int_0^{x-1}parity(;t)dt
\end{myequation}
Notice that when $x$ belongs to an even interval $[2n,2n+1]$
$p(;x)$ acts exactly like $\lfloor\frac{x}{2}\rfloor$.
On an odd interval $[2n+1, 2n+2]$, it grows continuously and monotonically from $n$ to $n+1$.}
\item{a safe composition operator $SComp$: Assume a vector of functions $\bar{g}_1(\bar{x};)\in \mathcal{W}$,
a vector of functions
$\bar{g}_2(\bar{x};\bar{y})\in \mathcal{W}$, and a function $h\in \mathcal{W}$ of arity $dim(\bar{g}_1)+dim(\bar{g}_2)$
(where $dim$ denotes the vector length).  Define new function
\begin{myequation}
f(\bar{x};\bar{y})=h(\bar{g}_1(\bar{x};);\bar{g}_2(\bar{x};\bar{y}))
\end{myequation}
It is clear from the asymmetry in this definition that normal
arguments can be repositioned in safe places whereas the opposite can not happen.}
\item{safe integration operator\footnote{Notice that for simplicity we misuse the basic functions (and $p'$) so that their arguments are now in normal positions
(the alternative is to redefine a new set of basic functions with arguments in normal positions).
}$SI$: Assume functions $g,h_0,h_1\in\mathcal{W}$. Let $p'(;x)=p(;x-1)+1$.
Define a new function as the solution of the following ODE:
\begin{myequation}
f(0,\bar{y};\bar{z})=&g(\bar{y};\bar{z})\\
\partial_x f(x,\bar{y};\bar{z})&=parity(x;)[h_1(p(x;),\bar{y};\bar{z},f(p(x;),\bar{y};\bar{z}))\\
&\qquad\qquad\qquad -f(2p(x;),\bar{y};\bar{z})]\\
&+parity(x-1;)[h_0(p'(x;),\bar{y};\bar{z},f(p'(x;),\bar{y};\bar{z}))\\
&\qquad\qquad\qquad\qquad -f(2p'(x;)-1,\bar{y};\bar{z})]
\end{myequation}
This operator closely matches Bellantoni and Cook's predicative recursion on notations:
if $x$ is an even integer, we apply $h_0$ to its predecessor $p(x;)$,
if $x$ is an odd integer, we apply $h_1$ to $p'(x;) = \lfloor \frac{x}{2} \rfloor$.}
\end{enumerate}
\end{definition}

This class $\mathcal{W}$  is based on the Bellantoni-Cook's constructions and the dichotomy of normal/safe arguments in order to have the following
properties, proved by induction.

\begin{proposition}~
\label{prop:properties_of_class_W}
\begin{enumerate}
\item{Class $\mathcal{W}$ preserves the integers, that is for every $f\in\mathcal{W}$ of arity $n$, $f\upharpoonright\Z^n\colon\Z^n\to\Z$.}
\item{Every function in $\mathcal{W}$ is polytime computable.}
\item{Every polytime computable integer function has a peaceful extension in $\mathcal{W}$.}
\end{enumerate}
\end{proposition}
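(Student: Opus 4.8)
The plan is to prove each of the three items by structural induction on the definition of $\mathcal{W}$, handling the basic functions first and then the two operators $SComp$ and $SI$. For item~(1), the basic functions $0,1,+,-,U,c$ clearly map integer tuples to integers; for $parity$ one checks $\sin(\pi x)=0$ at integers so $parity(;n)=0\in\Z$; and for $p$ one uses the stated fact that $p(;x)$ coincides with $\lfloor x/2\rfloor$ on even intervals, so on an even integer $2n$ we get $p(;2n)=n$, while on an odd integer $2n+1$ the integral $\int_0^{2n}parity(;t)\,dt$ counts exactly $n$ unit-area humps, giving $n$ again. The induction step for $SComp$ is immediate from the induction hypothesis. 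For $SI$ the subtle point is that the defining ODE has right-hand side $parity(x;)[\cdots]+parity(x-1;)[\cdots]$ which vanishes identically when $x$ is an integer (both $parity(x;)$ and $parity(x-1;)$ are then zero), so $f(\cdot,\bar y;\bar z)$ is locally constant near each integer; combined with the base case $f(0,\bar y;\bar z)=g(\bar y;\bar z)$ and an inductive argument moving one integer step at a time using the recurrence structure (on $[2m,2m+1]$ the derivative integrates to $h_1(p(x;),\ldots)-f(2p(x;),\ldots)$ evaluated appropriately, etc.), one shows $f$ takes integer values at integer arguments. This bookkeeping — showing the telescoping of the ODE solution over successive integers reproduces exactly Bellantoni--Cook predicative recursion on notation — is the main technical obstacle.

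For item~(2), one again proceeds by induction. The basic functions are all polytime computable real functions: $0,1,+,-,U,c$ are arithmetic, and $parity,p$ are compositions of $\sin$, $\max$, multiplication by constants, and a single definite integral of a nice bounded function, all of which are polytime computable in recursive analysis (the integral $\int_0^{x-1}parity(;t)\,dt$ can be evaluated to precision $2^{-n}$ in polynomial time since the integrand is explicitly polytime computable with an explicit polynomial modulus). The closure under $SComp$ follows because composition of polytime computable real functions is polytime computable and the safe/normal distinction does not affect real-number complexity. The closure under $SI$ is the delicate part: one must argue that the solution of the linear-in-$f$ ODE defining the $SI$ operator is polytime computable. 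Here the key is that the ODE is, on each unit interval, a \emph{linear} ODE in $f$ with polytime-computable coefficients (built from $parity$, $h_0$, $h_1$, and already-constructed values of $f$ at points with smaller argument), and that the solution over $[0,x]$ only depends on values of $f$ at arguments of the form $p(x;), 2p(x;), \ldots$ whose ``length'' is roughly half that of $x$ — this halving is exactly what keeps the recursion depth logarithmic and the total work polynomial, mirroring the Bellantoni--Cook polytime bound. One should cite that a linear ODE with polytime coefficients on a bounded interval has polytime solution, and then bound the recursion as in \cite{BelCoo92}.

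For item~(3), given a polytime computable integer function $\phi\colon\Z^n\to\Z$, Bellantoni and Cook's characterization \cite{BelCoo92} gives a term for $\phi$ in their algebra built from their basic functions (zero, successors, projections, predecessor, conditional) using safe composition and predicative recursion on notation. The plan is to exhibit, for each Bellantoni--Cook basic function and operator, a matching construction in $\mathcal{W}$ whose restriction to the integers equals it, and — crucially — which is \emph{peaceful}, i.e. $g(x,y,z)$ lies between $g(\lfloor x\rfloor,y,z)$ and $g(\lceil x\rceil,y,z)$ for integer $y,z\ge 1$. Peacefulness of the basic functions is checked directly (affine functions and the conditional are monotone or constant in each such argument between consecutive integers; $parity$ and $p$ are designed with this in mind — on even intervals $p$ is constant and on odd intervals monotone). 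Peacefulness is preserved by $SComp$ because peaceful functions compose appropriately, and by $SI$ because the ODE is constructed so that on odd intervals the solution interpolates monotonically between its values at the neighboring integers (this is precisely why the continuous predecessor $p$ and the $parity$-gated right-hand side were chosen). The hard part here, and the main obstacle of the whole proposition, is verifying that the $SI$ operator preserves peacefulness while still reproducing predicative recursion on notation at the integers: one must show that between consecutive integer values of the recursion argument $x$, the solution $f(x,\bar y;\bar z)$ stays in the interval spanned by its two integer endpoint values, which requires analyzing the sign and magnitude of the ODE's right-hand side on odd intervals and using the induction hypothesis that $h_0,h_1$ are themselves peaceful.
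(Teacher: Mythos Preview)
Your plan for Parts I and II is close to the paper's and would work, though in Part II you overcomplicate matters: on each unit interval $[2n,2n+1]$ the predecessor $p(x;)$ is constant (equal to $n$), so the right-hand side of the ODE does not depend on $f(x,\ldots)$ at all --- it is $parity(x;)\cdot C$ for a constant $C$ depending only on $f$ at strictly smaller integer arguments. Integrating gives the closed form $f(x)=f(2n)+[f(2n+1)-f(2n)](p(x+1)-n)$, a convex combination of the endpoint values. No general result about linear ODEs is needed, and the ``bounded interval'' hypothesis you want to cite would not hold anyway since $x$ ranges over all of $\R$.

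Part III has a real gap. Your strategy is to prove peacefulness is preserved through the inductive construction of $\mathcal{W}$, but this fails at the base case and at composition. The function $parity$ is \emph{not} peaceful: $parity(;0)=parity(;1)=0$ while $parity(;1/2)=\pi/2$, so its value at $1/2$ does not lie in $[parity(;0),parity(;1)]$. More seriously, $SComp$ does not preserve peacefulness in general --- composing peaceful functions can easily produce non-monotone behaviour between consecutive integers, and your ``compose appropriately'' does not survive scrutiny. The paper avoids this entirely by a different two-step argument: first show (by the obvious translation of Bellantoni--Cook terms) that every polytime integer function has \emph{some} extension $f\in\mathcal{W}$, with no peacefulness claimed; then observe from the convex-combination formula above that \emph{any} function produced by the $SI$ operator is automatically peaceful in its first variable, regardless of whether $h_0,h_1$ are. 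So one simply re-wraps $f$ in a trivial $SI$ with $\hat g()=f(0)$, $\hat h_0(x;y)=f(2x)$, $\hat h_1(x;y)=f(2x+1)$ to obtain a peaceful $\hat f$ agreeing with $f$ on the integers. You should replace your inductive-peacefulness plan with this wrapper trick.
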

\begin{proof}
\underline{Part I:}
Proof is by induction on the construction of functions in $\mathcal{W}$.
It is easy to see that the constant functions $0$ and $1$, addition, subtraction,
and projections all preserve $\Z$. Given $n\in\Z$ we have $p(;n)=\lfloor\frac{n}{2}\rfloor$ which is an integer.
Given $i,j,k\in\Z$ it is clear that $c(;i,j,k)=ij+(1-i)k$ is an integer.
The parity function is always $0$ at the integer points.
Hence, all the basic functions preserve the integers. Trivially composition preserves the integers.
Let $g,h_0,h_1\in\mathcal{W}$ be functions that preserve $\Z$ and consider the application of the safe integration
operator to define a new function $f\in\mathcal{W}$.
We use strong induction over the discrete values of the integration variable
to show that $f$ preserves $\N$ (for simplicity we restrict to non-negative integers;
also we neglect the arguments $\bar{y}$ and $\bar{z}$ and drop the `;'
in case all arguments of the function are normal).
The base case $f(0)=g$ holds by assumption on $g$. Let $n\in\N^{\ge1}$
and assume $f(j)\in\N$ for every $j\le 2n$, then

\begin{myequation}
\label{eq:class_W_1}
f(2n+1)&=f(2n)+\int_{2n}^{2n+1}parity(x)[h_1(p(x);f(p(x)))-f(2p(x))]dx\\
&=f(2n)+\int_{2n}^{2n+1}parity(x)[h_1(n;f(n))-f(2n)]dx\\
&=f(2n)+[h_1(n;f(n))-f(2n)]\int_{2n}^{2n+1}parity(x)dx\\
&=f(2n)+[h_1(n;f(n))-f(2n)]\cdot 1\\
&=h_1(n;f(n))
\end{myequation}

which is an integer value by the assumption on $h_1$ and the induction hypothesis on $f$.
Similarly, it can be shown that

\begin{myequation}
\label{eq:class_W_2}
f(2n+2)=h_0(n+1;f(n+1))
\end{myequation}
which is also an integer value.
This completes the proof of the first part of the proposition.\\

\noindent\underline{Part II:} Proof is by induction on the construction of functions in $\mathcal{W}$.
It is easy to see that all the basic functions are polytime computable. Composition preserves polytime computability.
Consider a function $f\in\mathcal{W}$ that is defined by safe integration from
$g,h_0,h_1$ where these latter functions are polytime computable.
At $x=0$ we have $f(0)=g$ which is polytime computable by assumption on $g$.
Assume $x\in[2n,2n+1]$ for some $n\in\N$, then

\begin{myequation}
\label{eq:class_W_3}
f(x)&=g+\int_0^{x}parity(u)[h_1(p(u);f(p(u)))-f(2p(u))]du\\
&=f(2n)+\int_{2n}^{x}parity(u)[h_1(n;f(n))-f(2n)]du\\
&=f(2n)+[h_1(n;f(n))-f(2n)]\int_{2n}^{x}parity(u)du\\
&=f(2n)+[f(2n+1)-f(2n)]\int_{2n}^{x}parity(u)du, \textit{\; from Eq. \eqref{eq:class_W_1}}\\
&=f(2n)+[f(2n+1)-f(2n)](p(x+1) - p(2n+1))\\
&=f(2n)+[f(2n+1)-f(2n)](p(x+1) - n)
\end{myequation}

Similarly, over odd intervals $[2n+1,2n+2]$ we have

\begin{myequation}
\label{eq:class_W_4}
f(x)=f(2n+1)+[f(2n+2)-f(2n+1)](p(x)-n)
\end{myequation}

From Eq. \eqref{eq:class_W_1} and Eq. \eqref{eq:class_W_2} we see that
the safe integration operator exactly simulates the behavior of the safe recursion operator of the
Bellantoni-Cook class.
So given the assumption that $g,h_0,h_1$ are polytime computable, we have $f\upharpoonright\N$
is polytime computable. Furthermore, from the base case $p(x)$ is polytime computable.
This completes the proof of the second part of the proposition.\\

\noindent\underline{Part III:}
First we use induction to show that every function in
the Bellantoni-Cook class (which captures integer polytime computability) has an extension in $\mathcal{W}$.
The Bellantoni-Cook class is defined by: $B= [0,U,s_0,s_1,pr,cond;\\SComp,SRec]$, see \cite{BelCoo92}.
The functions $0,U\in\mathcal{W}$ are extensions of the corresponding functions in $B$.
Define functions $\tilde{s}_i\in\mathcal{W}$ by $\tilde{s}_i(;x)=2x+i$ where $i\in\{0,1\}$.
Then $\tilde{s}_i$ are extensions of the successor functions $s_i$.
The predecessor function is defined as follows: $pr(;n)=\lfloor\frac{n}{2}\rfloor$.
From our definition of the class $\mathcal{W}$ we have $p\upharpoonright\N=pr$,
hence $p$ is an extension of the predecessor function $pr$.
Define a function $c_d\in\mathcal{W}$ as follows.

\begin{myequation}
c_d(;x,y,z)=c(;x-2p(;x),z,y)
\end{myequation}

Assume $x=2n$ for $n\in\N$. Then $c_d(;2n,y,z)=c(;2n-2p(;2n),z,y)=c(;0,z,y)=y$.
Now assume $x=2n+1$. Then $c_d(;2n+1,y,z)=c(;2n+1-2p(;2n+1),z,y)=c(;2n+1-2n,z,y)=c(;1,z,y)=z$.
So $c_d\upharpoonright\N^3=cond$, hence it is an extension of the conditional function $cond$.
The case for safe composition $SComp$ is easy.
Now assume $f\in B$ that is defined by safe recursion from $g,h_0,h_1\in B$.
Assume $\tilde{g},\tilde{h}_0,\tilde{h}_1\in\mathcal{W}$ are extensions
of $g,h_0,h_1$. Define the function $\tilde{f}\in\mathcal{W}$ by safe integration from $\tilde{g},\tilde{h}_0,\tilde{h}_1$. We claim that $\tilde{f}$ is an extension of $f$.
Proof is by strong induction on the recursion/integration variable.
At the base case we have $\tilde{f}(0)=\tilde{g}=g=f(0)$.
Let $n\in\N$, then from the proof of the first part of the proposition
we have:

\begin{myequation}
\tilde{f}(2n+1)&=\tilde{h}_1(n;\tilde{f}(n))=\tilde{h}_1(n;f(n)),\;\textit{from induction over n}\\
&=h_1(n;f(n)),\;\textit{by assumption on } \tilde{h}_1\\
&=f(2n+1),\;\textit{by definition of safe recursion}
\end{myequation}

Similarly, it can be shown that $\tilde{f}(2n+2)=f(2n+2)$, hence $\tilde{f}$ is an extension of $f$.
We have shown that every function in $B$ has an extension in $\mathcal{W}$.
It now remains to show that we can find a peaceful extension inside $\mathcal{W}$.
Consider Eq. \eqref{eq:class_W_3} with $x\in[2n,2n+1]$. We have

\begin{myequation}
f(x) &= f(2n)+[f(2n+1)-f(2n)](p(x+1) - n)\\
& = f(2n)+[f(2n+1)-f(2n)]\cdot\epsilon, \quad for\; \epsilon\in[0,1]\\
& = \epsilon f(2n+1) + (1-\epsilon) f(2n)
\end{myequation}

This latter equation shows that $f(x)\in[f(2n),f(2n+1)]$. Similarly, from Eq. \eqref{eq:class_W_4}
for $x\in[2n+1,2n+2]$ we have

\begin{myequation}
f(x) &= f(2n+1)+[f(2n+2)-f(2n+1)](p(x)-n)\\
& = f(2n+1)+[f(2n+2)-f(2n+1)]\cdot\delta, \quad for\; \delta\in[0,1]\\
& = \delta f(2n+2) + (1-\delta) f(2n+1)
\end{myequation}

This latter equation shows that $f(x)\in[f(2n+1),f(2n+2)]$. The latter two equations then
imply that every function generated by the safe integration operator is peaceful.
Now consider an arbitrary function $f\in\mathcal{W}$.
Define the following functions in $\mathcal{W}$:

\begin{myequation}
&\hat{g}()=f(0)\\
&\hat{h}_0(x;y)=f(2x)\\
&\hat{h}_1(x;y)=f(2x+1)
\end{myequation}

Now define a function $\hat{f}\in\mathcal{W}$ by safe integration using the functions $\hat{g},\hat{h}_1$, and $\hat{h}_2$.
It can be easily seen that $\hat{f}(n)=f(n)$ for every $n\in\N$.
In addition $\hat{f}$ is peaceful. This completes the proof of the third part of the proposition.
\end{proof}

The previous proposition indicates that $\mathcal{W}$ is a class of polytime computable real functions that
peacefully approximates polytime computable integer functions. Hence, using Theorem \ref{th:lip}
the following result is obtained.

\begin{theorem}\label{th:w1}
A Lipschitz function $f:\R \to \R$ is polytime computable iff it is $\mathcal{W}$-definable.
\end{theorem}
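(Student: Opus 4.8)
The plan is to derive this as an immediate instance of Theorem \ref{th:lip}, taking for $\mC$ the (arity-two fragment of the) function algebra $\mathcal{W}$. Theorem \ref{th:lip} asks exactly two things of $\mC$: that every function in it be polytime computable, and that $\mC$ approximate the class of polytime computable integer functions in the sense of Definition \ref{dfn:approximation_1}. Both are supplied by Proposition \ref{prop:properties_of_class_W}, so the whole argument reduces to checking these two hypotheses and invoking the theorem.

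First I would cite Proposition \ref{prop:properties_of_class_W}(2): every $g\in\mathcal{W}$ is polytime computable. (The content there is that the base functions $0,1,+,-,U,c,parity,p$ are polytime, that safe composition preserves polytime computability, and that the safe integration operator, by Equations \eqref{eq:class_W_1}--\eqref{eq:class_W_4}, reduces on the integers to Bellantoni--Cook predicative recursion on notation and hence stays within polytime.) This yields the first hypothesis of Theorem \ref{th:lip}.

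Next I would use Proposition \ref{prop:properties_of_class_W}(3): every polytime computable integer function has a \emph{peaceful extension} in $\mathcal{W}$. For $\mC=\mathcal{W}$ this is in fact stronger than what Definition \ref{dfn:approximation_1}(1) requires: if $\tilde g\in\mathcal{W}$ extends a polytime integer function $g\colon\Z^2\to\Z$ then $\tilde g(x,y)=g(x,y)$ for all $x,y\in\Z$, so trivially $|\tilde g(x,y)-g(x,y)|=0\le 1/4$. Hence $\mathcal{W}$ approximates the polytime computable integer functions. One small bookkeeping remark: Definition \ref{dfn:approximation_1} and Theorem \ref{th:lip} are stated for binary real and integer functions, so one restricts attention to the arity-two members of $\mathcal{W}$; since the extension produced in Proposition \ref{prop:properties_of_class_W}(3) has the same arity as the integer function it extends, this restriction costs nothing.

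With both hypotheses in hand, Theorem \ref{th:lip} applied with $\mC=\mathcal{W}$ gives precisely the stated equivalence: a Lipschitz $f\colon\R\to\R$ is polytime computable iff it is $\mathcal{W}$-definable. I do not expect a genuine obstacle here: the substantive work — porting Bellantoni and Cook's construction to the continuous setting through the safe integration operator (Proposition \ref{prop:properties_of_class_W}), and relating real polytime computability to approximate integer polytime computability (Theorem \ref{th:lip}, resting on Propositions \ref{prop:char_comp_over_R_2_comp} and \ref{prop:maversion_comp}) — is already done. The only thing requiring any care is confirming that the constants (the $1/4$ of approximation, the $3$ of $\mathcal{C}$-definability) and the arities in the hypotheses of Theorem \ref{th:lip} line up with what Proposition \ref{prop:properties_of_class_W} delivers, which is routine.
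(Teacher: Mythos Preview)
Your proposal is correct and mirrors the paper's own argument: the paper simply observes that Proposition \ref{prop:properties_of_class_W} makes $\mathcal{W}$ a class of polytime computable real functions that (peacefully, hence a fortiori in the sense of Definition \ref{dfn:approximation_1}) approximates the polytime computable integer functions, and then invokes Theorem \ref{th:lip}. Your additional remarks on the constants and arities are sound bookkeeping but not something the paper spells out.
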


Additionally, the previous proposition implies that any function in $\mathcal{W}$ has a polytime computable integer approximation (since $\mathcal{W}$ preserves the integers), hence using Corollary \ref{machinchose}, we can get the following result.

\begin{theorem}\label{th:w2}
Let $f:\R \to \R$ be some $n^k$-smooth function for some $k$.
Then $f$ is polytime computable iff it is $n^k$-$\mathcal{W}$-definable.
\end{theorem}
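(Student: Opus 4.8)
The plan is to read Theorem \ref{th:w2} off Corollary \ref{machinchose} (equivalently, off Theorem \ref{th:two}, which is its polytime special case and whose notation matches verbatim), taking $\mathcal{D}$ to be the class of polynomials $\N\to\N$ --- which contains the polynomials and is closed under composition, so that ``computable in time $\mathcal{D}$'' is polytime computability --- and $\mathcal{C}=\mathcal{W}$. All that then has to be supplied is the verification that $\mathcal{W}$ meets the two standing hypotheses of that corollary, and every ingredient for this is already contained in Proposition \ref{prop:properties_of_class_W}.

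First I would record that $\mathcal{W}$ peacefully approximates the polytime computable integer functions: by Part 2 of Proposition \ref{prop:properties_of_class_W} every function of $\mathcal{W}$ is polytime computable as a real function, and by Part 3 every polytime computable integer function $g$ has a \emph{peaceful} extension $\tilde g\in\mathcal{W}$; since $\tilde g$ agrees with $g$ on the integers, $|\tilde g-g|=0\le 1/4$ there, so the peaceful subclass of $\mathcal{W}$ approximates the polytime integer functions in the sense of Definition \ref{dfn:approximation_1} (with the harmless arity adjustment indicated in the Remark after Proposition \ref{prop:maversiond}). Next I would record that every $h\in\mathcal{W}$ has a polytime computable integer approximation: by Parts 1 and 2, $h$ is polytime computable and preserves $\Z$, so its restriction $h\upharpoonright\Z^d$ is itself a polytime computable integer function with $|h\upharpoonright\Z^d-h|=0\le 1$ on $\Z^d$ --- exactly the sufficient condition noted in the footnote to Corollary \ref{machinchose}. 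Corollary \ref{machinchose} then yields: $f$ is polytime computable iff, for some polynomial $T$, $f$ is $T$-$\mathcal{W}$-definable and $T$-smooth. Specialising $T=n^k$ gives the shape of Theorem \ref{th:w2}, and the direction ``$n^k$-$\mathcal{W}$-definable $\Rightarrow$ polytime'' is then immediate, since an $f$ that is both $n^k$-$\mathcal{W}$-definable and (by hypothesis) $n^k$-smooth is precisely clause (2) of the corollary at $T=n^k$.

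For the converse, $f$ polytime gives, through Proposition \ref{thm:characterizing_ptime_over_real_functions_case}, a polynomial modulus for $f$; feeding it into the construction of $g$ in the proof of Proposition \ref{prop:maversiond} --- where the exponent of $\#$ is tied to the degree of that modulus --- and then chasing the approximation chain of Theorem \ref{th:two} produces $n^{k'}$-$\mathcal{W}$-definability for the resulting $k'$. The one delicate point, and the step I expect to be the real obstacle, is reconciling this $k'$ with the exponent $k$ that witnesses $n^k$-smoothness of $f$. I would handle it by two observations: a modulus of continuity may always be replaced by a pointwise larger one, so only an upper bound on its degree matters; and $n^k$-smoothness of $f$ itself forces $f$ to possess a polynomial modulus of degree at most $k$ (run the defining inequality of $T$-smoothness at $T=n^k$ with $y$ governing the precision and $z$ governing the extension parameter, and solve for the required separation). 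A small robustness check --- that the argument in Proposition \ref{prop:maversiond} needs only $\#_k$ to dominate the modulus, which any degree-$\le k$ modulus already achieves --- then lets one take the exponent to be exactly $k$, giving $n^k$-$\mathcal{W}$-definability.

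Everything outside this exponent bookkeeping is routine: Parts 1--3 of Proposition \ref{prop:properties_of_class_W} are already established in the excerpt, so both directions of the equivalence amount to quoting Corollary \ref{machinchose} (or Theorem \ref{th:two}); and if one prefers to sidestep the bookkeeping, the statement can equally be read, as in Theorem \ref{th:two}, as: $f$ is polytime computable iff $f$ is $n^k$-smooth and $n^k$-$\mathcal{W}$-definable for some common $k$.
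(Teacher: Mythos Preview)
Your approach is essentially the paper's own: verify via Proposition \ref{prop:properties_of_class_W} that $\mathcal{W}$ satisfies the two hypotheses of Corollary \ref{machinchose} (peaceful approximation of polytime integer functions, and polytime integer approximations for every member, the latter because $\mathcal{W}$ preserves $\Z$), then invoke the corollary. The paper's proof is literally a one-line citation of Corollary \ref{machinchose}, so your verification of the hypotheses is simply spelling out what the paper leaves implicit.

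The one place you go beyond the paper is the ``exponent bookkeeping''---matching the $k$ witnessing $n^k$-smoothness to the $k$ produced by the $(1)\Rightarrow(2)$ direction of Theorem \ref{th:two}/Proposition \ref{prop:maversiond}. The paper does not address this at all; it simply cites the corollary, whose conclusion is the existential form (there exists $T$ with $f$ both $T$-smooth and $T$-$\mathcal{C}$-definable), and this is also how Corollary \ref{cor:w3}(3) restates it. So the intended reading of Theorem \ref{th:w2} is your final one: the same existentially quantified $k$ serves in both clauses. Under that reading your bookkeeping is unnecessary, and the proof reduces to exactly the paper's citation. Your attempt to force the exponent to equal a prescribed $k$ is plausible but not fully argued (the claim that $n^k$-smoothness yields a modulus of degree $\le k$, and that Proposition \ref{prop:maversiond} only needs $\#_k$ to dominate the modulus, would each need a short computation); since the paper does not make that stronger claim, you can safely drop it.
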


Notice that $\mathcal{C}$-definability of a function can be seen as a schema that builds a function $f$ from a function $\tilde{g}\in\mathcal{C}$ (see Definition \ref{dfn:approximation_1}).
Let $\mathit{Def}[\mathcal{C}]$ stand for $\mathcal{C}$-definability.
That is, $\mathit{Def}[\mathcal{C}]$ is the class
of functions $f$ such that $f$ is $\mathcal{C}$-definable.
Similarly, given a function $T\colon\N\to\N$, let $T\text{-}\textit{Def}[\mathcal{C}]$
denote $T\text{-}\mathcal{C}$-definability, that
is, $T\text{-}\textit{Def}[\mathcal{C}]$ is the class of functions $f$ such that $f$ is $T\text{-}\mathcal{C}$-definable.
Then, the class of polytime computable functions can be algebraically characterized in a machine-independent way as follows.

\begin{corollary}
\label{cor:w3}
A function $f\colon\R\to\R$ is polytime computable  iff
either (1) $f$ is Lipschitz and belongs to
$\mathit{Def}[\mathcal{W}]$, (2) $f$ is locally
poly-Lipschitz and belongs to $\mathit{Def}[\mathcal{W}]$, or
(3) $f$ is $n^k$-smooth and belongs to $n^k\text{-}\textit{Def}[\mathcal{W}]$ for some $k\in\N$.
\end{corollary}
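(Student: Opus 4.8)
The plan is to read off the corollary from the three specialized characterizations already proved --- Theorems \ref{th:w1}, \ref{th:lip_2}, and \ref{th:w2} --- using that $\mathcal{W}$ is, by Proposition \ref{prop:properties_of_class_W}, a class of polytime computable real functions that peacefully approximates the polytime computable integer functions and whose members have polytime computable integer approximations. So the argument is essentially a dispatch over the three disjuncts, with no new construction required.

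For the direction from the disjunction to polytime computability, assume one of (1)--(3) holds and split into cases. If (1) holds, then $f$ is Lipschitz and lies in $\mathit{Def}[\mathcal{W}]$, i.e.\ is $\mathcal{W}$-definable, so Theorem \ref{th:w1} yields polytime computability. If (2) holds, then $f$ is locally poly-Lipschitz and $\mathcal{W}$-definable; here I would invoke Theorem \ref{th:lip_2} with $\mathcal{C}=\mathcal{W}$, whose standing hypotheses hold because parts (2) and (3) of Proposition \ref{prop:properties_of_class_W} show that $\mathcal{W}$ consists of polytime computable real functions and (peacefully, hence in particular) approximates the polytime computable integer functions. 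If (3) holds, then $f$ is $n^k$-smooth and $n^k$-$\mathcal{W}$-definable for some $k$, and Theorem \ref{th:w2} gives polytime computability.

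For the converse, suppose $f$ is polytime computable; I claim disjunct (3) always holds, which is enough. Instantiating Theorem \ref{th:two} at $\mathcal{C}=\mathcal{W}$ (legitimate by Proposition \ref{prop:properties_of_class_W}, since $\mathcal{W}$ preserves the integers and so its members have polytime computable integer approximations), polytime computability of $f$ is equivalent to the existence of a positive integer $k$ for which $f$ is simultaneously $n^k$-$\mathcal{W}$-definable and $n^k$-smooth --- which is precisely condition (3). Under the hood, the exponent $k$ is produced from a polynomial modulus of continuity for $f$, which exists by Proposition \ref{thm:characterizing_ptime_over_real_functions_case}. Thus polytime computability implies (3), closing the biconditional; disjuncts (1) and (2) are recorded separately only because they furnish the sharper, exponent-free criterion available when $f$ happens moreover to be Lipschitz or locally poly-Lipschitz.

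There is essentially no real obstacle here, since the corollary merely packages earlier results. The only point demanding any care is checking that $\mathcal{W}$ meets the standing hypotheses of Theorems \ref{th:lip_2} and \ref{th:two} --- polytime computability of its members, peaceful approximation of the polytime computable integer functions, and the existence of polytime computable integer approximations --- and all three are already delivered by the three parts of Proposition \ref{prop:properties_of_class_W}.
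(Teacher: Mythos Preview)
Your proposal is correct and follows precisely the approach the paper intends: the corollary is stated without proof because it is a direct repackaging of Theorems \ref{th:w1}, \ref{th:lip_2}, and \ref{th:w2} together with Proposition \ref{prop:properties_of_class_W}, and you dispatch the three disjuncts exactly along those lines, with the forward direction handled by observing that disjunct (3) always fires via Theorem \ref{th:two}.
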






\bibliographystyle{elsarticle-num}
\bibliography{temp}

\begin{thebibliography}{10}
\expandafter\ifx\csname url\endcsname\relax
  \def\url#1{\texttt{#1}}\fi
\expandafter\ifx\csname urlprefix\endcsname\relax\def\urlprefix{URL }\fi
\expandafter\ifx\csname href\endcsname\relax
  \def\href#1#2{#2} \def\path#1{#1}\fi

\bibitem{BluCucShu98}
L.~Blum, F.~Cucker, M.~Shub, S.~Smale, Complexity and Real Computation,
  Springer, 1998.

\bibitem{BluShuSma89}
L.~Blum, M.~Shub, S.~Smale, On a theory of computation and complexity over the
  real numbers: {NP}-completeness, recursive functions, and universal machines,
  Bull. Amer. Math. Soc. 21~(1) (1989) 1--46.

\bibitem{Bus31}
V.~Bush, The differential analyzer. {A} new machine for solving differential
  equations, J. Franklin Inst. 212 (1931) 447--488.

\bibitem{Sha41}
C.~E. Shannon, Mathematical theory of the differential analyzer, J. Math. Phys.
  MIT 20 (1941) 337--354.

\bibitem{Moo96}
C.~Moore, Recursion theory on the reals and continuous-time computation,
  Theoret. Comput. Sci. 162~(1) (1996) 23--44.

\bibitem{Tur36}
A.~M. Turing, On computable numbers, with an application to the
  {E}ntscheidungsproblem, Proc. London Math. Soc. 2~(42) (1936) 230--265.

\bibitem{Grz57}
A.~Grzegorczyk, On the definitions of computable real continuous functions,
  Fund. Math. 44 (1957) 61--71.

\bibitem{Lac55}
D.~Lacombe, Extension de la notion de fonction r{\'e}cursive aux fonctions
  d'une ou plusieurs variables r\'eelles {III}, C. R. Acad. Sci. Paris 241
  (1955) 151--153.

\bibitem{Wei00}
K.~Weihrauch, Computable Analysis: an Introduction, Springer, 2000.

\bibitem{BraHerWei08}
V.~Brattka, P.~Hertling, K.~Weihrauch, A tutorial on computable analysis, in:
  S.~B. Cooper, B.~L{\"o}we, A.~Sorbi (Eds.), New Computational Paradigms,
  Springer, 2008, pp. 425--491.

\bibitem{JOC2007}
O.~Bournez, M.~L. Campagnolo, D.~S. Gra{\c c}a, E.~Hainry, Polynomial
  differential equations compute all real computable functions on computable
  compact intervals, J. Complexity 23~(3) (2007) 317--335.

\bibitem{FundamentaInformatica2006}
O.~Bournez, E.~Hainry, Recursive analysis characterized as a class of real
  recursive functions, Fund. Inform. 74~(4) (2006) 409--433.

\bibitem{CIEChapter2007}
O.~Bournez, M.~L. Campagnolo, A survey on continuous time computations, in:
  S.~B. Cooper, B.~L{\"o}we, A.~Sorbi (Eds.), New computational Paradigms,
  Springer, New York, 2008, pp. 383--423.

\bibitem{AMP95}
E.~Asarin, O.~Maler, A.~Pnueli, Reachability analysis of dynamical systems
  having piecewise-constant derivatives, Theoret. Comput. Sci. 138~(1) (1995)
  35--65.

\bibitem{Ruo94}
K.~Ruohonen, Event detection for {ODE}s and nonrecursive hierarchies, in:
  Proceedings of the Colloquium in Honor of Arto Salomaa. Results and Trends in
  Theoretical Computer Science (Graz, Austria, June 10-11, 1994), Vol. 812 of
  Lecture Notes in Comput. Sci., Springer, Berlin, 1994, pp. 358--371.

\bibitem{Ko91}
K.-I. Ko, Complexity Theory of Real Functions, Birkh{\"{a}}user, 1991.

\bibitem{GCharacterizing}
W.~Gomaa, Characterizing polynomial time computability of rational and real
  functions, in: S.~B. Cooper, V.~Danos (Eds.), Proceedings of DCM 2009, Vol.~9
  of Electron. Proc. Theor. Comput. Sci., 2009, pp. 54--64.

\bibitem{Ijuc}
O.~Bournez, W.~Gomaa, E.~Hainry, Algebraic characterizations of
  complexity-theoretic classes of real functions, {Accepted in the
  International Journal of Unconventional Computing, 2011}.

\bibitem{CMC02}
M.~L. Campagnolo, C.~Moore, J.~F. Costa, An analog characterization of the
  {G}rzegorczyk hierarchy, J. Complexity 18~(4) (2002) 977--1000.

\bibitem{Hof99}
M.~Hofmann, Type systems for polynomial-time computation, habilitation thesis
  (1999).

\bibitem{Jones00}
N.~D. Jones, The expressive power of higher-order types or, life without
  {CONS}, J. Funct. Programming 11~(1) (2001) 5--94.

\bibitem{MM00}
J.-Y. Marion, J.-Y. Moyen, Efficient first order functional program interpreter
  with time bound certifications, in: LPAR, Vol. 1955 of Lecture Notes in
  Comput. Sci., Springer, 2000, pp. 25--42.

\bibitem{Cob65}
A.~Cobham, The intrinsic computational difficulty of functions, in:
  Y.~Bar-Hillel (Ed.), Proceedings of the International Conference on Logic,
  Methodology, and Philosophy of Science, North-Holland, Amsterdam, 1965, pp.
  24--30.

\bibitem{BelCoo92}
S.~Bellantoni, S.~Cook, A new recursion-theoretic characterization of the
  polytime functions, Comput. Complexity 2 (1992) 97--110.

\bibitem{Clo98}
P.~Clote, Computational models and function algebras, in: E.~R. Griffor (Ed.),
  Handbook of Computability Theory, North-Holland, Amsterdam, 1998, pp.
  589--681.

\bibitem{CamOja06}
M.~L. Campagnolo, K.~Ojakian, The methods of approximation and lifting in real
  computation, in: Computability and Complexity in Analysis (CCA 2006), Vol.
  167 of Electron. Notes Theor. Comput. Sci., 2007, pp. 387--423.

\bibitem{kapron1996}
B.~M. Kapron, S.~A. Cook, A new characterization of type-2 feasibility, SIAM J.
  Comput. 25~(1) (1996) 117--132.

\bibitem{constable1973}
R.~Constable, Type two computational complexity, in: Proc. fifth annual ACM
  symposium on Theory of computing, 1973, pp. 108--121.

\bibitem{Isaac10}
H.~{F}{\'e}r{\'e}e, E.~{H}ainry, M.~{H}oyrup, R.~{P}{\'e}choux,
  \href{http://dx.doi.org/10.1007/978-3-642-17517-6\_27}{{I}nterpretation of
  stream programs: characterizing type 2 polynomial time complexity}, in:
  O.~Cheong, K.-W. Chwa, K.~Park (Eds.), {I}nternational {S}ymposium on
  {A}lgorithms and {C}omputation ({ISAAC}), Vol. 6506 of Lecture Notes in
  Comput. Sci., Springer, 2010, pp. 291--303.
\newline\urlprefix\url{http://dx.doi.org/10.1007/978-3-642-17517-6\_27}

\end{thebibliography}







\end{document}